\documentclass[journal]{IEEEtran}
\usepackage{xcolor}
\usepackage{amsmath,amsfonts}
\usepackage{amsmath}
\usepackage{array}
\usepackage{textcomp}
\usepackage{url}
\usepackage{verbatim}
\usepackage{graphicx}
\usepackage{cite}
\usepackage{tikz}
\usepackage{amsthm}
\usepackage{subcaption}
\usepackage[normalem]{ulem}
\usepackage{enumitem}
\usepackage{bbm}
\usepackage{amssymb}
	
\newtheorem{remark}{Remark}

\usepackage{mathtools}
\newtheorem{proposition}{Proposition}
\usepackage{breqn}
\usepackage[multiple]{footmisc}

\newtheorem{theorem}{Theorem}
\newtheorem{lemma}{Lemma}

\allowdisplaybreaks

\usepackage[linesnumbered,ruled,vlined]{algorithm2e}

\SetCommentSty{mycommfont}
\SetKwInput{KwInput}{Input}  
\SetKwInput{KwOutput}{Output}       	
\usepackage{hyperref}
\hypersetup{
	colorlinks   = true, 
	urlcolor     = blue, 
	linkcolor    = blue, 
	citecolor   = {blue} 
}

\IEEEoverridecommandlockouts
\begin{document}	
\title{Sub-Sampling for Positioning Privacy in ISAC: Deception by Aliasing via Sparse Arrays and Pilots}
\author{L. Yashvanth, \emph{Member, IEEE}, Christos Masouros, \emph{Fellow, IEEE}, Suraj Srivastava, \emph{Member, IEEE}, Aditya K. Jagannatham, \emph{Senior Member, IEEE}, and Lajos Hanzo, \emph{Life Fellow, IEEE}
	\thanks{L. Yashvanth, and C. Masouros are with the Dept. of Electronic and Electrical Engineering, University College London, WC1E 6BT London, U.K. E-mails: l.yashvanth@ucl.ac.uk, c.masouros@ucl.ac.uk.}
	\thanks{S. Srivastava is with the Dept. of Electrical Engineering, Indian Institute of Technology Jodhpur, Rajasthan 342030, India. E-mail: surajsri@iitj.ac.in.}
	\thanks{A. K. Jagannatham is with the Dept. of Electrical Engineering, Indian Institute of Technology Kanpur, 208016, India. E-mail: adityaj@iitk.ac.in.}
	\thanks{L. Hanzo is with the School of Electronics and Computer Science, University of Southampton, SO17 1BJ, U.K. E-mail: lh@ecs.soton.ac.uk.}
	\thanks{This work will be presented in part at the IEEE  International Workshop Signal Process. and Artif. Intell.  Wireless Commun. (SPAWC), 2026, Greece.}}
\maketitle
\begin{abstract}
Integrated sensing and communications (ISAC) enables simultaneous communication and sensing using shared spectrum and hardware resources in wireless systems. However, securing the sensing functionality against unauthorized receivers remains a fundamental challenge. In this paper, we propose a sub-sampling based sensing-privacy framework for communication-centric (CC)-ISAC systems that jointly exploits sparse arrays and sparse pilot allocations to induce controlled aliasing in the spatial and frequency domains, respectively. By interpreting antenna arrays and pilot subcarriers as spatial and frequency sampling mechanisms, respectively, we show that spatial-frequency undersampling  naturally distorts the range-angle multiple-input multiple-output (MIMO) ambiguity function (AF) observed by an unauthorized receiver. To this end, we first derive a closed-form expression for the range-angle MIMO-AF, and subsequently characterize the ghost targets that arise due to spatial and frequency-domain aliasing. Next, we establish a sufficient condition under which these ambiguities jointly translate into positioning ambiguity and show that, for sufficiently large spatial and frequency sub-sampling factors, an unauthorized receiver inevitably positions a target at incorrect ghost positions. Finally, we show that the proposed sub-sampling framework preserves the native legitimate ISAC performance without introducing additional trade-offs. Numerical results verify the analysis and show that sparse arrays and sparse pilots naturally enable sensing and positioning privacy through deception by aliasing.
\end{abstract}
\begin{IEEEkeywords}
ISAC, Sparse arrays, Range-Angle MIMO-Ambiguity functions, OFDM, Sensing privacy, Positioning.
\end{IEEEkeywords}
\vspace{-0.2cm}
\section{Introduction}
Integrated sensing and communications (ISAC) has emerged as a key use case for next-generation wireless systems~\cite{Fan_JSAC_2022,Andrew_JSTSP_2021}. The central idea of ISAC is to utilize common hardware, spectrum, and waveform resources to simultaneously communicate with a user equipment (UE) and sense the surrounding environment. While this joint operation improves spectral and hardware efficiency, it also introduces several new challenges due to the different requirements of communication and sensing functionalities~\cite{Kai_JSAC_2022}. In this paper, we address a specific challenge, namely sensing privacy, where accurate sensing should be achieved only by an authorized receiver, referred to as Alice, while preventing unauthorized sensing by an illegitimate receiver, referred to as Eve. In particular, we conceive a novel sub-sampling-based ISAC signaling framework that jointly exploits sparse arrays and sparse pilot allocations to induce controlled aliasing, thereby creating sensing ambiguity at an unauthorized receiver.
\vspace{-0.2cm}
\subsection{Challenges and Motivation}

A popular instance of ISAC is constituted by dual-function radar-communication (DFRC) systems, wherein a single base station (BS) conveys data to a UE, while simultaneously sensing targets in the surrounding environment using a common waveform~\cite{Fan_JSAC_2022}. However, the shared use of communication and sensing signals also introduces several security vulnerabilities~\cite{Onur_JSAIT_2023}. For example, $a)$ a target may intercept and decode information intended for a communication UE, $b)$ a communication UE may illegitimately infer target parameters that are meant to remain confidential, or $c)$ a more general external agent, referred to as Eve, may perform unauthorized sensing and/or eavesdrop on communication transmissions. In this paper, we focus on the latter case, where an external and silent Eve attempts to illegitimately sense the target parameters exploiting the illumination by the legitimate infrastructure. Addressing this security issue is critical, because unauthorized sensing  enables adversaries to track, identify, or localize targets without permission, which poses significant privacy and security threats to future wireless communications.

In general, securing the sensing functionality of an ISAC system is challenging for several reasons. Firstly, Eve typically remains passive and exploits signals transmitted by the legitimate ISAC BS without revealing its presence. This renders conventional countermeasures such as jamming or spoofing difficult to employ. Secondly, Eve may be located anywhere within the environment; thus it requires solutions for sensing-privacy that remain effective without assuming any prior knowledge of the location or channel characteristics of Eve. Thirdly, since sensing targets/receivers do not actively participate in the communication of data, conventional physical-layer security techniques developed for protecting transmitted data cannot be directly applied to prevent unauthorized sensing. Finally, instead of protecting communications from Eve, sensing privacy seeks to protect target related information. Consequently, encryption does not apply, as Eve may still infer target parameters from the received sensing signatures. 
Therefore, effective mechanisms must be developed for sensing-privacy before ISAC can be deployed in future wireless systems. Such solutions should impair unauthorized sensing via appropriate waveform and/or sensing-process design, while preserving the native ISAC performance at legitimate entities without introducing additional trade-offs. Striking this balance is a fundamental challenge in the design of secure ISAC systems.

\subsection{Related Work and Proposed Idea}
Several design and implementation aspects of ISAC systems  have recently received significant attention~\cite{Fan_TSP_2022,Zhen_TSP_2024,Fan_TIT_2025,Zhang_JSTSP_2025,Peishi_TWC_2025,Kaitao_TWC_2025,Anubhab_TWC_2026,Zhang_Networks_2024,Shihang_TSP_2024,Xiang_JSAC_2022}. For example, transmit precoding was conceived for ISAC in~\cite{Fan_TSP_2022}, constellation design was investigated in~\cite{Zhen_TSP_2024}, and optimal modulation strategies were analyzed in~\cite{Fan_TIT_2025}. Symbol-level precoding techniques were proposed in~\cite{Zhang_JSTSP_2025} and~\cite{Peishi_TWC_2025} for improving the energy efficiency and suppressing waveform sidelobes, respectively. Furthermore, distributed ISAC architectures and the impact of scheduling strategies were investigated in~\cite{Kaitao_TWC_2025} and~\cite{Anubhab_TWC_2026}, respectively. Finally, experimental demonstrations of communication-centric ISAC were reported in~\cite{Zhang_Networks_2024}, while random data-payload based sensing and covariance-constrained precoding designs were studied in~\cite{Shihang_TSP_2024} and~\cite{Xiang_JSAC_2022}, respectively. 

Despite these developments, securing the sensing functionality of ISAC systems remains a largely open challenge. Early approaches in~\cite{Nanchi_TWC_2021,Nanchi_TWC_2022} focused on preventing the interception of communication signals by a sensing target upon exploiting artificial noise and interference, respectively. However, as discussed earlier, protecting target information from an external and passive sensing receiver is considerably more challenging. In this context, the authors of~\cite{Jiaqi_TVT_2024} proposed the use of artificial noise to impair unauthorized sensing. However, such approaches typically require knowledge of Eve's channel characteristics, which may not be readily available in practice. More recently,~\cite{Kawon_TWC_2026} proposed to deliberately distort the ambiguity function (AF) of the transmit waveform, while~\cite{chen2025sensing} advocated illuminating scatterers for creating clutter to confuse the sensing signatures observed by Eve. Although effective in degrading the unauthorized sensing performance, such ideas may also compromise the legitimate ISAC performance.

More generally, the above techniques tend to achieve privacy by deliberately modifying the transmit waveform or sensing environment, which often comes at the expense of eroding the legitimate ISAC performance. To circumvent this, we adopt a fundamentally different approach. Specifically, we show that sensing privacy may be naturally gleaned from the sub-sampling of space-time-frequency signals in the spatial and frequency domains through sparse arrays and sparse pilot allocations, respectively. To unveil our idea, we interpret the reception of a signal using a multiple-antenna array as a spatial sampling process, where the signal observed at each antenna corresponds to a spatial sample of an underlying space-time-frequency signal. Then, the inter-antenna spacing determines the spatial sampling rate, and half-carrier wavelength antenna spacing corresponds to the Nyquist spatial-sampling rate~\cite{johnson1992array}. So, antenna spacings larger than half a wavelength result in spatial undersampling and hence introduce spatial aliasing. In this paper, we consider antenna spacings that are integer multiples of half a wavelength. Equivalently, this can be viewed as activating only a smaller subset of antennas from a Nyquist-sampled array, leading to what we refer to as a sparse array~\cite{Moeness_ProcIEEE_2016,johnson1992array}. Such spatial undersampling introduces grating lobes and spatial aliasing, which create additional virtual spatial frequencies and ambiguity in the observed direction of arrival (DoA.) Similar observations apply in the frequency domain (FD) of an orthogonal frequency division multiplexing (OFDM) system. Specifically, we periodically place sparse sensing-pilot subcarriers (SCs) across the OFDM grid, and this can be interpreted as FD sub-sampling. Then, this introduces FD aliases, analogous to the spatial domain (SD) aliases introduced by sparse arrays. By jointly exploiting spatial and FD aliases, we induce positioning ambiguity at an Eve for securing ISAC. Note that, owing to its cooperation with the ISAC BS, Alice can utilize both the pilots and data-payload for sensing, whereas Eve is restricted to sensing using only the pilot symbols. The key contributions of our paper are:

\begin{enumerate}[leftmargin=*]
\item For the proposed ISAC signalling, we derive a closed-form expression of the range-angle multiple-input multiple-output (MIMO) AF at Eve and show that it has a separable representation in the range-angle domains. (See Lemma~\ref{lem_AF_Eve_expression}.)
\item We prove that the transmit spatial MIMO-AF exhibits multiple peaks, when the BS employs a sparse array. Using this, we characterize the ghost targets observed by Eve in the angle domain under a maximum likelihood (ML)-criterion. This shows that sparse arrays promote sensing privacy in the SD. (See Lemma~\ref{lem_ambiguity_function} and Theorem~\ref{thm_ghost_target_location}.)
\item Similarly, we show that the range AF exhibits multiple local maxima when sparse sensing pilots are periodically placed over the OFDM SCs. Using this, we characterize the ghost targets observed by Eve in the range domain under the ML-criterion. (See Lemma~\ref{lem_range_AF_peaks} and Theorem~\ref{thm_range_ambiguity}.)
\item Using the expressions for range-angle ghost targets, we derive a sufficient condition for ensuring that Eve has an overall positioning level ambiguity for a target. We then show that provided the SD and FD sub-sampling factors are sufficiently high, Eve always experiences ghost positioning of a target. (See Theorem~\ref{thm_localization_ambiguity} and Proposition~\ref{prop_existance_of_position_error}.) 
\item When the UE performs linear minimum mean square estimation (LMMSE) of channels on data SCs based on least squares (LS) channel estimates obtained on pilot SCs, we show that the sub-sampling process preserves the native performance of the legitimate ISAC system without imposing any additional trade-offs. (See Theorem~\ref{prop_legitimate_performance}).
\end{enumerate}
Overall, we demonstrate how the sub-sampling of signals in the SD-FD can naturally facilitate sensing privacy in ISAC applications, thanks to the intentional aliasing introduced in the spatial-range domains, which enables sensing mis-information at an Eve. Importantly, this approach performs well regardless of where Eve is located, while preserving the performance of legitimate ISAC nodes. Our numerical findings reveal that with $16\%$ spatial- and $1.5\%$ pilot-sparsity levels, we can create a positioning privacy gap of $80$ meters between Alice and Eve.

\emph{Notations:} The mathematical notations and symbols carry usual meanings. In particular, $[n] \triangleq \{1,\ldots,n\}$ is the set of first $n$ natural numbers; $\biguplus$ is the disjoint set-union operator; $\text{det}(\mathbf{X})$ is the determinant of matrix $\mathbf{X}$; $\textrm{blkdiag}(\mathbf{X}_1,\ldots,\mathbf{X}_n)$ is a block-diagonal matrix with $\mathbf{X}_1,\ldots,\mathbf{X}_n$ on its diagonal; $\otimes$ and $\odot$ denote Kronecker and Hadamard products, respectively.

\section{System Model}\label{sec_sys_model}
We consider a communication-centric ISAC system in which an $N_t$-antenna BS transmits data in the downlink (DL) to $K_c$ UEs, while simultaneously sensing $K_s$ radar targets\footnote{For simplicity, we assume stationary targets and neglect Doppler effects.} using an OFDM waveform. The $K_c$ UEs, each equipped with $N_c$ receive antennas, are served in a time-division multiple-access (TDMA) manner. The system uses a total of $N_s$ SCs at a carrier frequency of $f_c$ spanning a bandwidth (BW) of $B$. Sensing is accomplished using a legitimate $N_r$ antenna receiver (called Alice), collocated with the BS under the monostatic mode. At the same time, an illegitimate receiver (called Eve), equipped with $N_e$ antennas, passively senses these targets using the signals transmitted by the ISAC BS, as shown in Fig.~\ref{fig_system_model}. Alice and Eve perform sensing within the unambiguous operating range of radar.
\vspace{-0.2cm}

\subsection{Sparse Antenna Array and ISAC Transmit Signal Model}
The BS array is configured as a uniform sparse array (USA) with inter-antenna spacing equal to $\eta_s\lambda_0/2$, where $\lambda_0$ is the carrier wavelength, and $\eta_s\geq1$, $\eta_s \in \mathbb{Z}_+$ is the spatial sub-sampling factor at the BS array. Note that $\eta_s=1$ corresponds to the Nyquist sampling rate with $\lambda_0/2$-spacing in the SD; while $\eta_s>1$ creates sparse arrays. 
The communication UE, Alice and Eve use a compact ULA, each with inter-antenna spacing equal to $\lambda_0/2$.  

Let $\mathbf{x}(t)$ be the signal transmitted from the ISAC BS at time $t$ in the complex base-band domain. Then we can write
\begin{equation}
\mathbf{x}(t) = \frac{1}{\sqrt{N_s}}\sum\nolimits_{n=0}^{N_s-1}\mathbf{x}_n e^{j2\pi n \Delta ft} \mathrm{rect}\left(\frac{t}{T}\right),
\end{equation}
where $\mathbf{x}_n \in \mathbb{C}^{N_t}$ is the signal transmitted from ISAC BS array on the $n$th SC, $\Delta f = B/N_s$ is the BW of a SC, $T = (1/\Delta f) + T_{\textrm{CP}}$ is the total duration of an OFDM symbol including the cyclic prefix (CP) that spans over $T_{\textrm{CP}}$ time units, and $\mathrm{rect}(\cdot)$ is the rectangular transmit-pulse used for communications.
\begin{figure}
	\vspace{-0.2cm}
\centering
\includegraphics[width=0.9\linewidth]{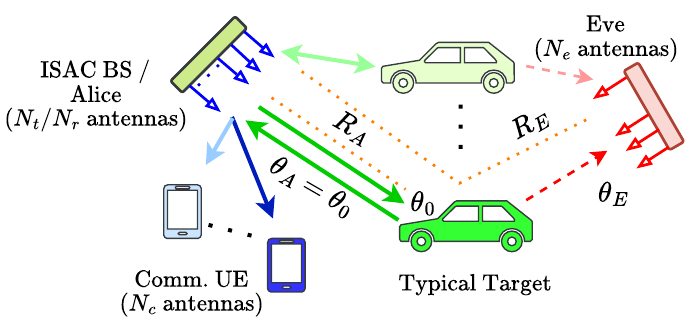}
\caption{ISAC with legitimate Rx. Alice and illegitimate Rx. Eve.}
\label{fig_system_model}
\end{figure}
\begin{figure}
	\vspace{-0.2cm}
\centering
\includegraphics[width=0.92\linewidth]{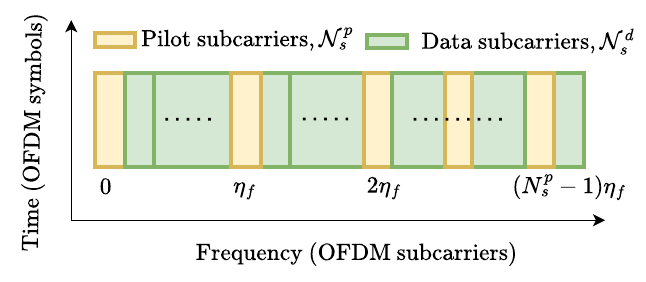}
\caption{Subcarrier arrangement for pilot and data in OFDM.}
\label{fig_OFDM_structure}
\vspace{-0.3cm}
\end{figure}
\vspace{-0.2cm}

\subsection{Sub-carrier Arrangement for Sparse Pilots in OFDM}
In line with the $5$G standards, we assume that a few of the SCs are reserved for pilot transmission, which can be used for channel estimation and sensing, while the remaining SCs carry data payload for communication with a UE. Let $N_s^p$ and $N_s^d$ denote the number of SCs allotted for pilot and data, respectively; then, we have $N_s^p + N_s^d = N_s$. Furthermore, let $\mathcal{N}_s \triangleq \{0,1,\ldots,N_s-1\}$, $\mathcal{N}_s^p$, and  $\mathcal{N}_s^d$ denote the index set of all SCs, and of the SCs allotted for pilot and data, respectively, such that $\mathcal{N}^p_s \biguplus \mathcal{N}_s^d = \mathcal{N}_s$. In this work, we periodically place the pilots across the SCs. More specifically
\vspace{-0.1cm}
\begin{equation}
	\mathcal{N}_s^p \triangleq \{0,\eta_f,2\eta_f,\ldots,(N_s^p-1)\eta_f\},
\end{equation}
where $\eta_f \!\geq\! 1, \eta_f\!\in\!\mathbb{Z}_+$ is the frequency sub-sampling factor, and governs the number of pilot SCs as $N_s^p\!=\!\left\lfloor\left(N_s\!-\!1\right)\big/\eta_f\right\rfloor\!+\!1$. Increasing $\eta_f$ leads to more widely spaced pilot SCs, resulting in sparser pilot SCs. In Fig.~\ref{fig_OFDM_structure}, we pictorially depict the SC-arrangement for pilot and data in the OFDM grid.\footnote{The proposed framework can also be extended to Doppler domain by incorporating sparse sensing pilots along the temporal dimension. However, it is beyond the scope of this paper and is left for future work.}
\vspace{-0.2cm}
\subsection{Sensing System Model}
Under the sensing operation, the echo signals obtained from target reflections are received by Alice and Eve. Assume that the $k$th target ($k=1,\ldots,K_s$) to be sensed is based at an angle $\theta^k_0$ with respect to the BS antenna array axis, and at $\theta^k_A$ and $\theta^k_E$ w.r.t. the receive arrays of Alice and Eve, respectively. Moreover, under monostatic sensing by Alice, we have $\theta^k_A = \theta^k_0$. The sensing channels observed by the antenna arrays are modeled using the array steering response vectors, defined as 
\begin{equation}\label{eq_array_steering_vec}
\mathbf{a}_N(\theta;\eta_s) \triangleq [1,e^{-j\pi\eta_s\sin(\theta)},\ldots,e^{-j\pi(N-1)\eta_s\sin(\theta)}]^T,
\vspace{-0.1cm}
\end{equation}
where $N$ is the number of antenna elements, $\eta_s$ is the array spatial-sampling factor, and $\theta$ is the path angle constituting the channel. For notational brevity, when $
\eta_s=1$, we adopt the re-parameterization of $\tilde{\mathbf{a}}_{N}(\theta
) \!\triangleq \mathbf{a}_N(\theta;1)$. 

At Alice, who can cooperate with the BS, sensing is performed using both pilot and data SCs. By contrast, since Eve is assumed to remain silent and disconnected from the network, she does not have access to the information transmitted on the data SCs, but she can perform target sensing using the pilot SCs. More specifically, exploiting the data SCs would require seamless error-free decoding of random data symbols, which is infeasible due to Eve's lack of knowledge of the BS-Eve channel. Therefore, we assume that Eve knows $\mathbf{x}_n$ only for $n \in \mathcal{N}_s^p$ and passively senses targets using the SCs indexed by $\mathcal{N}_s^p$. We next present the signal models at Alice and Eve.

\subsubsection{Alice's Signal Model}
The target echo received at Alice in the base-band domain at time $t$ can be written as $\mathbf{y}_a(t)=$
\begin{align}
 & \sum\nolimits_{k=1}^{K_s}\beta^k_{a}\tilde{\mathbf{a}}_{N_r}(\theta^k_0)\mathbf{a}^H_{N_t}(\theta^k_0;\eta_s)\mathbf{x}(t-\tau^k_A)e^{-j2\pi f_c  \tau^k_A} + \mathbf{n}_{a}(t), \nonumber \\
&= \frac{1}{\sqrt{N_s}}\sum\nolimits_{n=0}^{N_s-1}\sum\nolimits_{k=1}^{K_s}\beta^k_{a}\tilde{\mathbf{a}}_{N_r}(\theta^k_0)\mathbf{a}^H_{N_t}(\theta^k_0;\eta_s) \mathbf{x}_n e^{-j2\pi f_c\tau^k_A} \nonumber\\
& \hspace{1cm}\times e^{j2\pi n \Delta f(t-\tau^k_A)} \mathrm{rect}\left(\frac{t-\tau^k_A}{T}\right) + \mathbf{n}_{a}(t), \label{eq_time_signal_alice}
\end{align}
where $\tau^k_A \geq 0$ is the round-trip-time (RTT) of the transmit signal at Alice after reflection from the $k$th target, and $\mathbf{n}_a(t) \in \mathbb{C}^{N_r}$ is the additive white Gaussian noise at Alice. When the RTT of all the targets is less than the CP duration, by sampling the received signal at a rate of $T_s = p/B$, $p \in \mathbb{Z}_+$ we obtain the $p$th sample of the received signal vector at Alice as
\vspace{-0.1cm}
\begin{multline}\label{eq_rcvd_signal_BB}
\mathbf{y}_{a,p} = \frac{1}{\sqrt{N_s}}\sum\nolimits_{k=1}^{K_s}\Bar{\beta}^k_a \tilde{\mathbf{a}}_{N_r}(\theta^k_0)\mathbf{a}^H_{N_t}(\theta^k_0;\eta_s) \sum\nolimits_{n=0}^{N_s-1} \mathbf{x}_n \\ \times e^{j2\pi \frac{n}{N_s}p} \times e^{-j2\pi n \Delta f \tau^k_A} + \mathbf{n}_{a,p},
\vspace{-0.1cm}
\end{multline}
where $\bar{\beta}^k_a \triangleq \beta^k_a e^{-j2\pi f_c\tau^k_A}$ is the complex channel gain between the ISAC BS and Alice accounting for the $k$th target's radar cross section (RCS) gain and the path-loss, while $\mathbf{n}_{a,p} \in \mathbb{C}^{N_r}$ is the noise vector at Alice. Subsequently, collecting all samples within the OFDM symbol and discarding the samples corresponding to the CP, we obtain the FD representation of the received signal in~\eqref{eq_rcvd_signal_BB}, denoted by $\tilde{\mathbf{Y}}_{a} \in \mathbb{C}^{N_r \times N_s}$ as $ \tilde{\mathbf{Y}}_{a} =$
\begin{equation}\label{eq_rcv_matix_one_OFDM_symbols}
\!\sum_{k=1}^{K_s}\!\bar{\beta}^k_a\tilde{\mathbf{a}}_{N_r}(\theta^k_0)\Big\{\!\Big(\!\mathbf{a}^H_{N_t}(\theta^k_0;\eta_s)\underbrace{\left[\mathbf{x}_0,\ldots,\mathbf{x}_{N_s-1}\right]}_{\triangleq \tilde{\mathbf{X}}}\Big) \odot \mathbf{d}_a^T(\tau^k_A)\Big\} + \tilde{\mathbf{N}}_a,
\end{equation}
where $\mathbf{d}_a(\tau) \in \mathbb{C}^{N_s}$ is the \emph{frequency-steering response vector} of OFDM at Alice corresponding to delay $\tau$, given by
\begin{equation}\label{eq_freq_steering_vec}
\mathbf{d}_a(\tau) \triangleq \left[1,e^{-j2\pi\Delta f \tau},e^{-j4\pi\Delta f \tau},\ldots,e^{-j2\pi(N_s-1)\Delta f \tau}\right]^T,
\end{equation}
and $\tilde{\mathbf{N}}_a$ is the noise matrix at Alice in the FD.
\begin{figure*}[!ht]
	\vspace{-0.8cm}
	\begin{equation}\label{eq_Alice_signal_model}
		\mathbf{Y}_a = \sum_{k=1}^{K_s}\!\bar{\beta}^k_a\tilde{\mathbf{a}}_{N_r}(\theta^k_0)  \Big\{\!\Big(\!\mathbf{a}^H_{N_t}(\theta^k_0;\eta_s)    \underbrace{\left[\mathbf{x}_{0,1},\ldots,\mathbf{x}_{0,L},\mathbf{x}_{1,1},\ldots,\mathbf{x}_{1,L},\ldots,\mathbf{x}_{N_s-1,1},\ldots,\mathbf{x}_{N_s-1,L}\right]}_{\triangleq {\mathbf{X}} \in \mathbb{C}^{N_t\times N_sL}}\Big)\! \odot\! \left(\mathbf{d}_a(\tau^k_A)\otimes\mathbf{1}_L\right)^T\!\Big\}\! +\! \mathbf{N}_a.
	\end{equation}
	\hrule
\end{figure*}
\begin{figure*}[!ht] 
	\vspace{-0.5cm}
	\begin{equation}\label{eq_Eve}
	\!\!\mathbf{Y}_e \!=\!\! \sum_{k=1}^{K_s}\! \bar{\beta}^k_e\tilde{\mathbf{a}}_{N_e}\!(\theta^k_E)\Big\{\!\Big(\!\mathbf{a}^H_{N_t}\!(\theta^k_0;\eta_s)\!\! \underbrace{\left[\mathbf{x}_{0,1},\!\ldots,\!\mathbf{x}_{0,L},\mathbf{x}_{\eta_f,1},\!\ldots,\!\mathbf{x}_{\eta_f,L},\!\ldots,\!\mathbf{x}_{(N_s^p\!-\!1)\eta_f,1},\ldots,\mathbf{x}_{(N_s^p\!-\!1)\eta_f,L}\right]}_{\triangleq {\mathbf{X}_e} \in \mathbb{C}^{N_t\times N_s^pL}}\!\Big) \odot \left(\mathbf{d}_e(\tau^k_E)\!\otimes\!\mathbf{1}_L\!\right)^T\!\!\Big\} \!+\! \mathbf{N}_e,
	\end{equation}
	\hrule
	\vspace{-0.3cm}
\end{figure*}
Note that~\eqref{eq_rcv_matix_one_OFDM_symbols} corresponds to the received space-frequency signal after processing one OFDM symbol. In practice, communication is a continuous process and the OFDM symbols are incessantly transmitted and received. We can therefore exploit this continuum of data, collect multiple measurements of target parameters, and jointly process them to improve the sensing performance. This effectively increases the radar's coherent processing interval (CPI). Let $L \geq N_t$ be the total number of OFDM symbols, and $\mathbf{x}_{n,\ell}$ be the signal vector transmitted from the BS array on $n$th SC of $\ell$th OFDM symbol. Then, the space-time-frequency (STF) domain signal, $ \mathbf{Y}_a \in \mathbb{C}^{N_r \times N_sL}$ at Alice is given in~\eqref{eq_Alice_signal_model}, where $\mathbf{N}_a$ is the total noise at Alice and $\mathbf{X} \in \mathbb{C}^{N_t \times N_sL}$ is the transmit signal in the STF domain. From~\eqref{eq_Alice_signal_model}, the sensing parameters of interest at Alice correspond to the DoAs, $\left\{\theta^k_0\right\}_{k=1}^{K_s}$ and range of the targets: $\left\{R^k_A\right\}_{k=1}^{K_s}$, where $R^k_A$ is related to the RTT as $R^k_A = c\tau^k_A/2$ with $c=3\!\times \!10^8 \ m/s$ being the speed of light. 

\subsubsection{Eve Signal Model}
Recall that Eve senses the target using pilot SCs. Then, similar to~\eqref{eq_Alice_signal_model}, after reflection of the signal from the target, the signal received at Eve, $\mathbf{Y}_e \in \mathbb{C}^{N_e \times N^p_sL}$, can be obtained as in~\eqref{eq_Eve},
where $\mathbf{d}_e(\tau_E) \in \mathbb{C}^{N_s^p}$ is the frequency-steering response vector at Eve, given by
\begin{equation}\label{eq_freq_steering_vec_eve}
\mathbf{d}_e(\tau) = \left[1,e^{-j2\pi\eta_f\Delta f \tau},e^{-j4\pi\eta_f\Delta f \tau},\ldots,e^{-j2\pi(N_s^p-1)\eta_f\Delta f \tau}\right]^T\!\!,
\end{equation}
and $\tau^k_E \geq 0$ is the time-of-flight (TOF) of the signal from the ISAC BS to Eve after reflection from the $k$th target. The TOF captures the range information of the target under bistatic mode. Then, Eve estimates the directions of departure (DoD) $\left\{\theta^k_0\right\}_{k=1}^{K_s}$, DoAs $\left\{\theta^k_E\right\}_{k=1}^{K_s}$, and TOFs $\left\{\tau^k_E\right\}_{k=1}^{K_s}$. 

Finally, we recognize that under the OFDM framework considered, the maximum un-ambiguous range of the sensing system equals $c/\Delta f$. Accordingly, in this paper, we restrict our focus to cases where $\tau^k_A, \tau^k_E \in (0, 1/\Delta f)$. 
\vspace{-0.2cm}
\subsection{Communication System Model and Signal Design}\label{sec:comm_sys_model}
Since TDMA assigns all SCs to a single UE at any given time, without loss of generality, we present the analysis and results for a single UE in the rest of the paper. Accordingly, let $\mathbf{H}_{c,n}\in\mathbb{C}^{N_c\times N_t}$ be the BS-UE channel on the $n$-th SC. For simplicity, we assume that $\mathbf{H}_{c,n}$ has full rank. 
Let $\mathbf{X}_n \triangleq \left[\mathbf{X}\right]_{:,(n-1)L+1:nL} \in \mathbb{C}^{N_t \times L}$ collect the transmit signal vectors on the $n$th SC. 
Then, similar to~\eqref{eq_Alice_signal_model}, the signal received at the UE on the STF grid is given by
\begin{equation}\label{eq_comm_rcvd_signal}
\mathbf{Y}_{c} = \left[\mathbf{H}_{c,1},\ldots,\mathbf{H}_{c,N_s}\right]\textrm{blkdiag}\left(\mathbf{X}_1,\ldots,\mathbf{X}_{N_s}\right) + \mathbf{N}_c,
\end{equation}
where $\mathbf{N}_c \in \mathbb{C}^{N_c \times N_sL}$ is the additive Gaussian noise at the UE. Let $\mathbf{n}_{c,n,\ell}$ be the UE-noise vector on the $n$th SC of $\ell$th OFDM symbol. We use the model $\mathbf{n}_{c,n,\ell} \sim \mathcal{CN}(0,\sigma_c^2\mathbf{I}_{N_c})$.

On the data SCs, the transmit signal, $\mathbf{x}_n$ is designed as
\begin{equation}
\mathbf{x}_n = \mathbf{W}_n\mathbf{s}_n, \ n \in \mathcal{N}_s^d,
\end{equation}
where $\mathbf{W}_n \in \mathbb{C}^{N_t\times N_t}$ is the transmit precoding matrix (TPM) and $\mathbf{s}_n \in \mathbb{C}^{N_t}$ collects the information symbols transmitted over $N_t$ antennas on the $n$th SC, which are drawn from i.i.d. Gaussian distribution with zero mean and unit-variance. Note that Gaussian signaling is capacity-optimal in systems communicating over Gaussian channels. Furthermore, the information-theoretic rate-optimal transmit signal is the one that maximizes~\cite{telatar1999capacity}
\begin{equation}\label{eq_througput}
\mathcal{R} \triangleq \frac{1}{N_s}\sum\nolimits_{n \in \mathcal{N}^d_s}\log_2\text{det}\left(\mathbf{I}_{N_c}+ \gamma_{c,n}\mathbf{H}_{c,n}\mathbf{\Sigma}_{X,n}\mathbf{H}_{c,n}^{H}\right),
\end{equation}
where $\mathbf{\Sigma}_{X,n} \in \mathbb{C}^{N_t \times N_t}$ denotes the transmit signal covariance matrix on the $n$th SC, and $\gamma_{c,n}$ is the transmit signal-to-noise ratio (SNR): $P_n/\sigma_c^2$ with $P_n$ being the total transmit power available for $n$th SC. Now, for analytical tractability, we assume that the BS allots equal power to all available spatial eigen modes and SCs. This can be justified because $a)$ at high SNR, equal power allocation is the rate-optimal strategy, $b)$ in practice, it can be readily implemented without much loss in performance in i.i.d. scenarios, and $c)$ equally allotting power across all spatial eigen modes is also information-theoretically rate-optimal, when the BS only has statistical channel state information (CSI) of the communication UE~\cite{telatar1999capacity}. As a result, we have $P_n = P/N_s$, where $P$ is the total available power at ISAC BS, and the optimal precoding matrix for the $n$th SC is given by $\mathbf{W}_n =\frac{1}{\sqrt{N_t}}\mathbf{V}_{\mathbf{H}_{c,n}}$, where $\mathbf{V}_{\mathbf{H}_{c,n}} \in \mathbb{C}^{N_t \times N_t}$ is the right singular matrix of $\mathbf{H}_{c,n}$~\cite{telatar1999capacity}. 
Then, we have $\boldsymbol{\Sigma}_{X,n} =$
\begin{equation}    
\frac{1}{L}\mathbb{E}\left[\mathbf{X}_n\mathbf{X}_n^H\right] = \frac{1}{L}\mathbf{W}_n\mathbb{E}\left[\sum_{\ell=1}^L\mathbf{s}_{n,\ell}\mathbf{s}_{n,\ell}^H\right]\mathbf{W}_n^H = \frac{\mathbf{V}_{\mathbf{H}_{c,n}}\mathbf{V}_{\mathbf{H}_{c,n}}^H}{N_t},
\end{equation}
where we used $\mathbb{E}[{s}_{n,\ell}\mathbf{s}_{n,\ell}^H] = \mathbf{I}_{N_t}$ for all $n \in \mathcal{N}_s,\ell\in [L]$. Then noting that $\mathbf{V}_{\mathbf{H}_{c,n}}$ is a unitary matrix, it follows that $\mathbf{\Sigma}_{X,n} = (1/N_t)\mathbf{I}_{N_t}$.

Similarly, for pilot SCs on the $\ell$th OFDM symbol, we have 
\begin{equation}\label{eq_pilot_signal}
\mathbf{x}_n = [\boldsymbol{\Phi}]_{:,\ell\!\!\!\mod\!N_t}, \ n\in\mathcal{N}_s^p,
\end{equation}
where $[\boldsymbol{\Phi}]_{:,q} \in \mathbb{C}^{N_t}$ is a deterministic sensing signal chosen as the $q$th column of a suitable $\mathbb{C}^{N_t\times N_t}$-dimensional codebook that has favorable properties for sensing/estimation tasks, e.g., a Hadamard or discrete Fourier transform (DFT) matrix. In that case, it follows that $\boldsymbol{\Phi}\boldsymbol{\Phi}^H = \mathbf{I}_{N_t}$. 
\section{Spatial-Range Sub-Sampling Promotes Sensing Privacy}

In this section, we analytically show how sparse arrays combined with periodically placed sparse pilots over the OFDM SCs can naturally create sensing ambiguities at Eve along the range-angle domain.

\subsection{The Range-Angle MIMO Ambiguity Function at Eve}
The AF of an ISAC waveform characterizes the resolution with which the target parameters can be estimated, and it corresponds to the deterministic output of a receiver matched filter (MF)~\cite{levanon2004radar}. Assume that Eve attempts to sense the targets characterized by the angle-delay tuple $(\boldsymbol{\theta}_0,\boldsymbol{\theta}_E,\boldsymbol{\tau}_E)$ using a MF steered toward a reference angle-delay tuple $(\theta_0',\theta_E',\tau_E')$, where $\boldsymbol{\theta}_0 \triangleq [\theta_0^1,\ldots,\theta_0^{K_s}]^T$, $\boldsymbol{\theta}_E \triangleq [\theta_E^1,\ldots,\theta_E^{K_s}]^T$, and $\boldsymbol{\tau}_E \triangleq [\tau_E^1,\ldots,\tau_E^{K_s}]^T$. Then, we have the following result. 
\begin{lemma}\label{lem_AF_Eve_expression}
For the ISAC system described in Sec.~\ref{sec_sys_model}, the joint range-angle MIMO AF at Eve corresponding to targets having the parameters $(\boldsymbol{\theta}_0,\boldsymbol{\theta}_E,\boldsymbol{\tau}_E)$ and evaluated at the reference parameters $(\theta_0',\theta_E',\tau_E')$ admits the following representation
\vspace{-0.1cm}
\begin{multline}\label{eq_eve_AF_seperable_form}
	\psi_E(\boldsymbol{\theta}_0,\theta_0',\boldsymbol{\theta}_E,\theta_E',\boldsymbol{\tau}_E,\tau_E') \\ = \sum_{k=1}^{K_s} \psi_{EA}(\theta_E',\theta^k_E)\psi_{BS}(\theta_0',\theta^k_0,\eta_s)\psi_{ER}(\tau_E',\tau^k_E,\eta_f),
\end{multline}
where we have
\begin{align}
\hspace{-0.5cm}\psi_{EA}(\theta_E',\theta^k_E) &=\left|{\sin\left(\frac{\pi N_e}{2}\Delta^k_{\theta_E',\theta_E}\right)}\Big/{N_e\sin\left(\frac{\pi}{2}\Delta^k_{\theta_E',\theta_E}\right)}\right|,\label{eq_AMB_Eveangle}\\
	\hspace{-0.4cm}\psi_{BS}(\theta_0',\theta^k_0,\eta_s) &= \left|{\sin\!\left(\!\frac{\pi N_t}{2}\eta_s\Delta^k_{\theta_0',\theta_0}\!\right)}\!\Big/\!{N_t\sin\!\left(\frac{\pi}{2}\eta_s\Delta^k_{\theta_0',\theta_0}\right)}\right|\!,\label{eq_AMB_BS}\\
	\hspace{-0.1cm}\psi_{ER}(\tau_E',\tau^k_E,\eta_f) &=\left|\frac{\sin\left({\pi N_s^p \eta_f\Delta f}\Delta^k_{\tau_E',\tau_E}\right)}{N^p_s\sin\left({\pi \eta_f\Delta f}\Delta^k_{\tau_E',\tau_E}\right)}\right|. \label{eq_AMB_Everange}
\end{align}
Here, $\Delta^k_{\theta_0',\theta_0}
\triangleq
\sin(\theta_0')-\sin(\theta^k_0),
\Delta^k_{\theta_E',\theta_E}
\triangleq
\sin(\theta_E')-\sin(\theta^k_E)$,
denote the DoD and DoA mismatches, respectively, with
$\Delta^k_{\theta_0',\theta_0},
\Delta^k_{\theta_E',\theta_E}
\in
\Omega_s \triangleq (-2,2)$,
while
$\Delta^k_{\tau_E',\tau_E}
\triangleq
\tau_E'-\tau^k_E
\in
\Omega_f
\triangleq
\left(-\frac{1}{\Delta f},\frac{1}{\Delta f}\right)$ 
denotes the TOF/delay mismatch.
\end{lemma}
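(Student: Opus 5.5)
We prove the lemma by writing Eve's matched-filter output in closed form and then evaluating it with the orthogonality of the pilot codebook and three geometric sums.

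\textbf{Matched-filter output.} We take the AF to be the normalized magnitude of Eve's MF output when the noiseless echo $\mathbf{Y}_e$ in~\eqref{eq_Eve} is correlated with a reference echo built from the hypothesized tuple $(\theta_0',\theta_E',\tau_E')$. Since Eve knows $\mathbf{X}_e$, the reference signal is
\[
\mathbf{S}(\theta_0',\theta_E',\tau_E')=\tilde{\mathbf{a}}_{N_e}(\theta_E')\Big\{\big(\mathbf{a}^H_{N_t}(\theta_0';\eta_s)\mathbf{X}_e\big)\odot\big(\mathbf{d}_e(\tau_E')\otimes\mathbf{1}_L\big)^T\Big\}.
\]
The MF output is $\mathrm{tr}(\mathbf{S}^H\mathbf{Y}_e)$, normalized by $N_eN_tN_s^pL$ (up to power constants). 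The AF is taken as a per-target sum of magnitudes, so that the $K_s$ contributions add as in~\eqref{eq_eve_AF_seperable_form}. Because each term is rank one in the receive dimension, the trace factors as
\[
\tilde{\mathbf{a}}_{N_e}^H(\theta_E')\tilde{\mathbf{a}}_{N_e}(\theta_E^k)\times\sum_{n\in\mathcal N_s^p}\sum_{\ell=1}^L e^{j2\pi n\Delta f(\tau_E'-\tau_E^k)}\,\mathbf{a}^H_{N_t}(\theta_0^k;\eta_s)\mathbf{x}_{n,\ell}\mathbf{x}_{n,\ell}^H\mathbf{a}_{N_t}(\theta_0';\eta_s).
\]
The receive factor is a geometric sum in $e^{-j\pi\Delta^k_{\theta_E',\theta_E}}$. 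Its magnitude gives the Dirichlet kernel $\psi_{EA}$ in~\eqref{eq_AMB_Eveangle} after dividing by $N_e$.

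\textbf{Decoupling the transmit and frequency factors.} By~\eqref{eq_pilot_signal}, $\mathbf{x}_{n,\ell}$ depends only on $\ell$, namely $[\boldsymbol\Phi]_{:,\ell\bmod N_t}$, and not on $n$. The double sum therefore splits into two factors:
\[
\Big(\sum_{n\in\mathcal N_s^p}e^{j2\pi n\Delta f\Delta^k_{\tau_E',\tau_E}}\Big)\Big(\mathbf{a}^H_{N_t}(\theta_0^k;\eta_s)\Big[\sum_{\ell=1}^L\mathbf{x}_\ell\mathbf{x}_\ell^H\Big]\mathbf{a}_{N_t}(\theta_0';\eta_s)\Big).
\]
We assume $L$ is a multiple of $N_t$, or take $L\to\infty$ with an averaging normalization; this is justified by $L\ge N_t$. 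Then $\sum_\ell\mathbf{x}_\ell\mathbf{x}_\ell^H=(L/N_t)\boldsymbol\Phi\boldsymbol\Phi^H=(L/N_t)\mathbf I_{N_t}$. The transmit factor reduces to $\mathbf{a}^H_{N_t}(\theta_0^k;\eta_s)\mathbf{a}_{N_t}(\theta_0';\eta_s)$, a geometric sum in $e^{-j\pi\eta_s\Delta^k_{\theta_0',\theta_0}}$. Its normalized magnitude is $\psi_{BS}$ in~\eqref{eq_AMB_BS}. The frequency factor runs over $n=m\eta_f$ for $m=0,\dots,N_s^p-1$, a geometric sum in $e^{j2\pi\eta_f\Delta f\Delta^k_{\tau_E',\tau_E}}$. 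Dividing by $N_s^p$ gives $\psi_{ER}$ in~\eqref{eq_AMB_Everange}.

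\textbf{Ranges of the mismatches.} The domains $\Omega_s=(-2,2)$ and $\Omega_f=(-1/\Delta f,1/\Delta f)$ follow directly from $\sin(\cdot)\in[-1,1]$ and $\tau\in(0,1/\Delta f)$.

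\textbf{Main obstacle.} The delicate step is the multi-target bookkeeping. With $K_s>1$, the MF output is a coherent sum over $k$ with complex gains $\bar\beta^k_e$ and phases. The stated form is a sum of magnitudes with no gains, so it must be obtained by definition: we normalize each target's gain to unity and define the multi-target AF as the superposition of the single-target ambiguity magnitudes, which is the usual convention. We will state this convention explicitly. Alternatively, we can bound $|\sum_k\cdot|\le\sum_k|\cdot|$ and note that the lemma gives a representation of the per-target MF responses. A second, milder point is the exact orthogonality $\sum_\ell\mathbf x_\ell\mathbf x_\ell^H\propto\mathbf I$, which requires $L$ to be a multiple of $N_t$; otherwise it holds only asymptotically.
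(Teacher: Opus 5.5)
Your proposal is correct and follows essentially the same route as the paper: form the matched-filter output against the reference echo, define the AF as the sum over targets of per-target magnitudes with the gains $\bar\beta^k_e$ dropped, assume $L=\kappa N_t$ so that $\sum_{\ell}\mathbf{x}_{n,\ell}\mathbf{x}_{n,\ell}^H=(L/N_t)\mathbf{I}_{N_t}$, and evaluate the three resulting geometric sums as Dirichlet kernels. The differences are cosmetic: you factor the $(n,\ell)$ double sum using the fact that the pilot vector does not depend on $n$ before invoking orthogonality, whereas the paper applies orthogonality on each pilot SC separately, which would also cover SC-dependent pilots; and because of the $L/N_t$ factor, the normalizer that yields exactly $\psi_{EA}\psi_{BS}\psi_{ER}$ is $N_eN_s^pL$ rather than $N_eN_tN_s^pL$.
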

\begin{proof}
	See Appendix~\ref{app_Eve_AF}.
\end{proof}

Observe from Lemma~\ref{lem_AF_Eve_expression} that for any given target, the range-angle MIMO-AF is separable across the receive-array, transmit-array, and delay dimensions. Consequently, the estimation of angle and range parameters can be analyzed independently, allowing us to study the sensing privacy of each parameter separately. Motivated by this, we next investigate how sparse arrays and sparse pilot allocations naturally induce ambiguities in the spatial and range domains, respectively. Furthermore, observe from~\eqref{eq_eve_AF_seperable_form} that the overall AF is given by the superposition of the AFs corresponding to individual targets. Therefore, all subsequent results are presented for a given target and apply identically to more than one target. Accordingly, target index $k$ is omitted for notational brevity.
\subsection{Spatial-Sensing Privacy Against Eve}\label{sec_security_Eve}
For ideal sensing of target angles at Eve, AFs must satisfy
\begin{equation}\label{eq_AF_reqd}
\psi_{BS}(\theta_0',\theta_0,\eta_s) = \phi(\theta_0'-\theta_0), \quad
\psi_{EA}(\theta_E',\theta_E) = \phi(\theta_E'-\theta_E),
\end{equation}
where $\phi(x)$ attains its maximum at $x=0$ and rapidly decays to zero for $x\neq 0$ (e.g., $\phi(x)=\delta(x)$, the Dirac delta function). If either of the conditions in~\eqref{eq_AF_reqd} is violated, accurate sensing becomes infeasible.
Next, we show that the transmit spatial MIMO-AF, $\psi_{BS}(\theta_0',\theta_0,\eta_s)$ does not satisfy~\eqref{eq_AF_reqd} when $\eta_s>1$. 

\begin{lemma}\label{lem_ambiguity_function}
Let $\Delta_s \triangleq \theta_0'-\theta_0 \in \Omega_s$ be the  DoD-spatial difference. Then, the transmit spatial MIMO-AF at Eve using an $N_t$-element uniform sparse array with spatial-sampling factor $\eta_s$, attains local maxima at 
\begin{equation}\label{eq_peak_AF_values}
	\Delta_{s,n} = 2n/\eta_s, \quad n=0, \pm 1, \pm 2, \ldots, \lfloor \eta_s \rfloor.
\end{equation}
\end{lemma}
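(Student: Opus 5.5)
Our starting point is the closed form of the transmit spatial MIMO-AF from Lemma~\ref{lem_AF_Eve_expression}, eq.~\eqref{eq_AMB_BS}. Here $\Delta_s$ is read as the sine-domain mismatch $\Delta_{\theta_0',\theta_0}=\sin(\theta_0')-\sin(\theta_0)\in\Omega_s=(-2,2)$, which is what the lemma's notation means. Setting $u\triangleq \frac{\pi}{2}\eta_s\Delta_s$, we have
\begin{equation*}
\psi_{BS}=\left|\frac{\sin(N_t u)}{N_t\sin(u)}\right| \triangleq |D_{N_t}(u)|,
\end{equation*}
which is the magnitude of the normalized Dirichlet kernel. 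Indeed, $D_{N_t}(u)=\frac{1}{N_t}\sum_{m=0}^{N_t-1}e^{j2mu}$ up to a unimodular phase, so $|D_{N_t}(u)|\le 1$ by the triangle inequality. Equality holds iff all terms $e^{j2mu}$ coincide, i.e., iff $u\in\pi\mathbb{Z}$. This gives the plan: characterize the global maxima of $|D_{N_t}|$ in the variable $u$, then map them back to $\Delta_s$.

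First, I will show that $|D_{N_t}(u)|$ is $\pi$-periodic in $u$. Replacing $u$ by $u+\pi$ multiplies the numerator by $(-1)^{N_t}$ and the denominator by $-1$. At $u=n\pi$ the ratio is defined by continuity, and L'H\^opital (or the geometric-sum form) gives $|D_{N_t}(n\pi)|=1$. Since $|D_{N_t}|\le1$ everywhere, these points are global, hence local, maxima. Translating back, $u=n\pi$ gives $\Delta_{s,n}=2n/\eta_s$. I will then intersect this set with the admissible domain $\Omega_s=(-2,2)$, which requires $|n|<\eta_s$. Since $\eta_s\in\mathbb{Z}_+$, this yields $n=0,\pm1,\ldots,\pm(\eta_s-1)$. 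At $|n|=\eta_s$ we get $\Delta_{s,n}=\pm2$, the closed boundary, attained only in the limit $\theta_0,\theta_0'\to\pm\pi/2$. I will note that the paper's index range up to $\lfloor\eta_s\rfloor$ includes these boundary points, so they are understood as limiting or endpoint peaks. The $n=0$ peak is the true target; the others are the grating-lobe aliases, present iff $\eta_s>1$.

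To be complete, I will also remark that within each period the remaining stationary points, near $u=(2m+1)\pi/(2N_t)$ with $m\neq0$, are sidelobes. These satisfy $|D_{N_t}|\le 1/(N_t\sin(3\pi/(2N_t)))<1$. They are therefore strictly dominated, and the \emph{principal} local maxima (value $1$) are exactly the $\Delta_{s,n}$ listed. This distinction matters for the ML ghost-target result that follows.

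The main obstacle is bookkeeping rather than analysis. One issue is reconciling the statement's $\Delta_s=\theta_0'-\theta_0$ with the sine-difference variable used in~\eqref{eq_AMB_BS}; I will state explicitly that the sine-domain mismatch is meant. The other is handling the endpoints $|n|=\eta_s$ of the open set $\Omega_s$ at the level of care the paper intends.
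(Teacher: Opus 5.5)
Your proof is correct and follows the paper's route: closed form of $\psi_{BS}$, peaks where $\sin(\frac{\pi}{2}\eta_s\Delta_s)=0$, map back to $\Delta_{s,n}=2n/\eta_s$, then intersect with $\Omega_s$. The equality case of the triangle inequality actually proves these points are maxima, whereas the paper only argues that the denominator vanishes there. Your range $|n|\le\eta_s-1$ is exactly what the paper's own criterion $|2n/\eta_s|<2$ gives, so the $\lfloor\eta_s\rfloor$ endpoint is a slip in the statement.

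The one false claim is in your optional sidelobe remark. The first sidelobe peak lies slightly inside $u=3\pi/(2N_t)$ and exceeds $1/(N_t\sin(3\pi/(2N_t)))$. For $N_t=4$, $|D_4(u)|=|\cos u\cos 2u|$ peaks at $\frac{2}{3\sqrt{6}}\approx 0.272$, which is larger than $1/(4\sin(3\pi/8))\approx 0.271$. You can drop the remark, since the strict triangle inequality already gives $|D_{N_t}|<1$ off $\pi\mathbb{Z}$. If you want an explicit gap, use the valid bound $1/(N_t\sin(\pi/N_t))\le 1/2$ (for $N_t\ge 2$) outside the main lobes.
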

\begin{proof}
	See Appendix~\ref{app_SP_AF_Peak}.
\end{proof}
\vspace{-0.1cm}
Next, we exploit the above result and show how sparse arrays can promote spatial-sensing privacy against Eve.

\begin{theorem}\label{thm_ghost_target_location}
Consider a uniform sparse array based CC-ISAC BS equipped with $N_t$ antennas and spatial-sampling factor $\eta_s$ as described in Sec.~\ref{sec_sys_model}. If an illegitimate receiver Eve attempts to estimate the DoD parameter $\theta_0$ of the target using the maximum-likelihood (ML) criterion, then it also observes \emph{ghost target DoD estimates} given by
\vspace{-0.1cm} 
\begin{equation}\label{eq_ghost_target_location}
	\hat{\theta}_{0,n} =
	\sin_{(p)}^{-1}\!\left(\sin(\theta_0)+\frac{2n}{\eta_s}\right),
	\qquad
	n \in \mathcal{N}_g(\theta_0),
	\vspace{-0.1cm} 
\end{equation}
where $\sin_{(p)}^{-1}(\cdot)$ denotes the inverse sine function within its principal domain, $[-1,1]$ and $\mathcal{N}_g(\theta) \triangleq$
\vspace{-0.1cm} 
\begin{equation}\label{eq_gh_location_ind}
	\left\{ n \in \mathbb{Z}\setminus\{0\}
	\Big|
	\left\lceil
	-\frac{\eta_s}{2}(1+\sin\theta_0)
	\right\rceil \leq n \leq \left\lfloor
	\frac{\eta_s}{2}(1-\sin\theta_0)
	\right\rfloor
	\right\}.
\end{equation}
\vspace{-0.5cm}
\end{theorem}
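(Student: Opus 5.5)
The proof links the ML criterion to the matched-filter output, that is, to the AF of Lemma~\ref{lem_AF_Eve_expression}, and then reads off the maximizers of the transmit factor using Lemma~\ref{lem_ambiguity_function}. For a single target with known waveform $\mathbf{X}_e$ and white Gaussian noise, maximizing the likelihood of $\mathbf{Y}_e$ over $(\theta_0',\theta_E',\tau_E')$, with the complex gain $\bar{\beta}_e$ concentrated out via its LS solution, is equivalent to maximizing the normalized matched-filter magnitude. By Lemma~\ref{lem_AF_Eve_expression}, this equals the product $\psi_{EA}(\theta_E',\theta_E)\,\psi_{BS}(\theta_0',\theta_0,\eta_s)\,\psi_{ER}(\tau_E',\tau_E,\eta_f)$ in the noiseless/high-SNR regime. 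Here I would use $\boldsymbol{\Phi}\boldsymbol{\Phi}^H=\mathbf{I}_{N_t}$ and $L\ge N_t$, which whiten the transmit covariance. Because the product is separable, the ML search over $\theta_0'$ decouples: for any fixed $(\theta_E',\tau_E')$, the DoD candidates are exactly the global maximizers of $\psi_{BS}(\cdot,\theta_0,\eta_s)$ over $\theta_0'\in[-\pi/2,\pi/2]$.

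Next I would show that every peak of Lemma~\ref{lem_ambiguity_function} is a \emph{global} maximizer, with value $1$. This is where $\Delta_s$ must be read as the sine mismatch $\Delta_{\theta_0',\theta_0}=\sin\theta_0'-\sin\theta_0$, as in \eqref{eq_AMB_BS}. Setting $u=\frac{\pi}{2}\eta_s\Delta$, the factor is the Dirichlet kernel $|\sin(N_tu)/(N_t\sin u)|$. This kernel is $\pi$-periodic in $u$, is bounded by $1$, and attains $1$ exactly when $u\in\pi\mathbb{Z}$, i.e., when $\Delta=2n/\eta_s$. The bound $|\sin(N_t u)|\le N_t|\sin u|$ follows by induction, with equality only when $\sin u=0$. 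Thus the likelihood takes the same value at $\Delta=0$ and at every $\Delta=2n/\eta_s$, and these points are indistinguishable under ML. The ones with $n\neq0$ are the ghosts.

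The remaining step, which I expect to be the main obstacle (mostly bookkeeping), is deciding which aliases are physically realizable. I need $\sin\theta_0'=\sin\theta_0+2n/\eta_s\in[-1,1]$. This is equivalent to $-\frac{\eta_s}{2}(1+\sin\theta_0)\le n\le\frac{\eta_s}{2}(1-\sin\theta_0)$, and since $n$ is an integer, taking ceiling and floor gives exactly $\mathcal{N}_g(\theta_0)$ in \eqref{eq_gh_location_ind}, after excluding $n=0$ (the true target). Inverting with the principal branch $\sin^{-1}_{(p)}$ gives \eqref{eq_ghost_target_location}. This is a bijection because the BS array is a linear array that only resolves $\sin\theta$ on $[-\pi/2,\pi/2]$. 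Two further details need care. At the boundary cases $\sin\theta_0'=\pm1$ the endpoint is included, which is consistent with the closed interval. The set is nonempty whenever $\eta_s\ge2$, since the admissible interval for $n$ has length $\eta_s\ge2$ and so contains at least one nonzero integer. I would also remark that the same conclusion holds for multiple targets, by applying the argument to each summand of \eqref{eq_eve_AF_seperable_form}.
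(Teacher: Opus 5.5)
Your proposal is correct and takes essentially the same route as the paper's proof. Both invoke the separable AF of Lemma~\ref{lem_AF_Eve_expression}, note that every aliased peak $\Delta_{s,n}=2n/\eta_s$ of $\psi_{BS}$ has the same height $1$ as the main lobe (so ML cannot separate them), and then map back via $\sin\theta_0'=\sin\theta_0+2n/\eta_s$, intersecting with $[-1,1]$ to obtain $\mathcal{N}_g(\theta_0)$. Your Dirichlet-kernel argument for global maximality, your explicit ML-to-matched-filter reduction, and your closed-interval endpoint handling are tighter than the paper's heuristic vanishing-denominator argument and its $[-1,1)$. One caveat: exact whitening $\sum_{\ell}\mathbf{x}_{n,\ell}\mathbf{x}_{n,\ell}^H=\frac{L}{N_t}\mathbf{I}_{N_t}$ needs $L$ to be a multiple of $N_t$, as assumed in the paper's derivation of Lemma~\ref{lem_AF_Eve_expression}, not merely $L\ge N_t$.
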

\begin{proof}
	See Appendix~\ref{app_SP_Ghost}.
\end{proof}
Observe from Theorem~\ref{thm_ghost_target_location} that for any true target parameter $\theta_0$ and location of Eve, we have $|\mathcal{N}_g(\theta_0)| > 1$, i.e., Eve always witnesses ghost targets. 
Hence, even if Eve may be able to estimate the target DoA, it cannot precisely estimate the target DoD, thereby securing the sensing operation in the SD.
\subsection{Range-Sensing Privacy Against Eve}\label{sec_security_range_Eve}
Similar to Sec.~\ref{sec_security_Eve}, we next deduce how sparse pilots promote range-domain ambiguity at Eve. 

\begin{lemma}\label{lem_range_AF_peaks}
Let $\Delta_\tau \triangleq \tau_E'-\tau_E  \in \Omega_f$ be the delay difference. Then, the range-AF at Eve using an $N_s$-SC OFDM waveform with frequency-sampling factor $\eta_f$, attains local maxima at 
\begin{equation}\label{eq_Eve_range_AF_maxima}
	\Delta_{\tau,m} = \frac{m}{\eta_f \Delta f}, \quad m = 0, \pm 1, \pm 2, \ldots, \pm \left \lfloor \eta_f \right \rfloor.
\end{equation}
\end{lemma}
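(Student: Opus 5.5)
The plan is to work directly from the closed-form range-AF in Lemma~\ref{lem_AF_Eve_expression}, namely
\[
\psi_{ER}(\tau_E',\tau_E,\eta_f)=\left|\frac{\sin\left(\pi N_s^p x\right)}{N_s^p\sin\left(\pi x\right)}\right|,\qquad x\triangleq \eta_f\Delta f\,\Delta_\tau,
\]
which is the magnitude of a normalized Dirichlet kernel $D(x)\triangleq \frac{1}{N_s^p}\sum_{i=0}^{N_s^p-1}e^{j2\pi i x}$. The argument is the range-domain analogue of Lemma~\ref{lem_ambiguity_function}, with $\eta_s\Delta_s/2$ replaced by $\eta_f\Delta f\Delta_\tau$.

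\textbf{Step 1: periodicity.} Since $|D(x)|$ is periodic in $x$ with period $1$, the AF viewed as a function of $\Delta_\tau$ is periodic with period $1/(\eta_f\Delta f)$.

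\textbf{Step 2: value at integer points.} At every integer $x=m$, the sum $\sum_i e^{j2\pi i m}=N_s^p$, so $|D(m)|=1$. This value is obtained by taking the limit via L'H\^opital's rule when the ratio form is $0/0$.

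\textbf{Step 3: these are maxima.} By the triangle inequality, $|D(x)|\le 1$ for all $x$. Hence every point $\Delta_{\tau,m}=m/(\eta_f\Delta f)$ attains the global maximum value $1$, and is in particular a local maximum. These are the aliasing (grating) peaks, the range counterparts of the grating lobes in Lemma~\ref{lem_ambiguity_function}.

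\textbf{Step 4: which peaks lie in $\Omega_f$.} The condition $\Delta_{\tau,m}\in\Omega_f=(-1/\Delta f,1/\Delta f)$ is equivalent to $|m|<\eta_f$. This gives the stated range $m=0,\pm1,\ldots$ up to $\lfloor\eta_f\rfloor$. The endpoint $|m|=\eta_f$ falls exactly on the open boundary of $\Omega_f$, so it should be handled the same way as in the proof of Lemma~\ref{lem_ambiguity_function}, i.e., by the same boundary convention as $n=\pm\lfloor\eta_s\rfloor$ there.

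\textbf{Step 5: strictness (optional).} To show that these are the main-lobe local maxima and not merely maxima, we can argue that $|D(x)|<1$ for non-integer $x$. Equality in the triangle inequality requires all the phasors $e^{j2\pi i x}$ to be aligned, which forces $x\in\mathbb{Z}$ when $N_s^p\ge 2$. The Dirichlet kernel also has weaker sidelobe local maxima between its zeros at $x=k/N_s^p$; the statement concerns only the dominant peaks, and I would note this explicitly.

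\textbf{Expected obstacle.} The main difficulty is this last point together with the boundary handling in Step 4. The phrase ``local maxima'' must be read as the unit-height grating peaks, and the index range must be matched carefully against the open interval $\Omega_f$. Everything else follows directly from the Dirichlet-kernel structure.
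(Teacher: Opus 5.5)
Your proposal is correct and follows the same skeleton as the paper. The paper's proof of this lemma defers to that of Lemma~\ref{lem_ambiguity_function}: place the peaks where the denominator $\sin(\pi\eta_f\Delta f\Delta_\tau)$ vanishes, i.e.\ $\eta_f\Delta f\Delta_\tau=m\in\mathbb{Z}$, and keep those with $\Delta_{\tau,m}\in\Omega_f$, i.e.\ $|m|<\eta_f$.

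Where you differ is in how you justify that these points are maxima. The paper only argues that the AF peaks ``when the denominator approaches $0$.'' That is not conclusive on its own, because the numerator $\sin(\pi N_s^p\eta_f\Delta f\Delta_\tau)$ vanishes at the same points. Your exponential-sum form settles the $0/0$ cleanly: $|D(x)|\le 1$ by the triangle inequality and $D(m)=1$, so these are in fact global maxima of height one. You also get strictness for $N_s^p\ge 2$.

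Your concern about the endpoints is well founded, and the paper has the same issue. For integer $\eta_f$, as assumed in Sec.~\ref{sec_sys_model}, the condition $|m|<\eta_f$ gives $|m|\le\eta_f-1$. So $m=\pm\lfloor\eta_f\rfloor$ lies on $\partial\Omega_f$, exactly as $n=\pm\lfloor\eta_s\rfloor$ does in Lemma~\ref{lem_ambiguity_function}. Your Step~3 already covers these points if $\psi_{ER}$ is regarded on all of $\mathbb{R}$ or on the closure of $\Omega_f$. It is cleaner to state that convention explicitly than to appeal to Lemma~\ref{lem_ambiguity_function}, whose proof carries the same off-by-one.
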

\begin{proof}
	Similar to proof of Lemma~\ref{lem_ambiguity_function}. Skipped for brevity. 
\end{proof}
Now, similar to Theorem~\ref{thm_ghost_target_location}, we deduce the range-domain ghost targets at Eve in the following result. 
\begin{theorem}\label{thm_range_ambiguity}
Consider the CC-ISAC system with pilot SCs placed periodically over a $N_s$-SC OFDM system along with a  frequency-sampling factor of $\eta_f > 1$, as described in Sec.~\ref{sec_sys_model}. If an illegitimate receiver (Eve) attempts to estimate the delay (TOF) $\tau_E$ of the target using the ML criterion, then it also observes \emph{ghost target delay estimates} given by
\begin{equation}\label{eq_delay_ghost_values}
	\hat{\tau}_{E,m} = \tau_E + \frac{m}{\eta_f \Delta f},  \quad m \in \mathcal{M}_g(\tau_E),
\end{equation}
where the index set $\mathcal{M}_g(\tau_E)$ is defined as $	\mathcal{M}_g(\tau_E) \triangleq$
\begin{equation} \label{eq_index_set_range}
	\left\{
	m \in \mathbb{Z} \setminus \{0\}
		\Big|
	\left\lceil
	-\eta_f\tau_E \Delta f
	\right\rceil \leq m \leq \left\lfloor
\eta_f(1-\tau_E \Delta f)
	\right\rfloor
	\right\}.
\end{equation}
\end{theorem}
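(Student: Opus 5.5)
The proof will mirror that of Theorem~\ref{thm_ghost_target_location}, with the delay playing the role of the sine of the DoD. Under Gaussian noise, Eve's ML estimate of the delay maximizes the matched-filter output over candidate delays. By Lemma~\ref{lem_AF_Eve_expression}, this output factorizes, so for a single target the delay-dependent part is $\psi_{ER}(\tau_E',\tau_E,\eta_f)$, evaluated at the candidate $\tau_E'$. The ML delay estimates are therefore the candidates $\tau_E'$ that maximize $\psi_{ER}$ within the admissible delay interval $(0,1/\Delta f)$ fixed in Sec.~\ref{sec_sys_model}.

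\textbf{Step 1: peak values.} First I will show that the peaks from Lemma~\ref{lem_range_AF_peaks} are \emph{global} maxima of equal height, not just local maxima. Put $x = \pi\eta_f\Delta f\,\Delta_\tau$, so that $\psi_{ER} = |\sin(N_s^p x)/(N_s^p\sin x)|$. This Dirichlet kernel satisfies $\psi_{ER}\le 1$, with equality exactly when $x\in\pi\mathbb{Z}$, i.e., at $\Delta_{\tau,m}=m/(\eta_f\Delta f)$. The value $1$ at these points follows from L'H\^opital's rule, or directly because every phase term $e^{-j2\pi q\eta_f\Delta f\Delta_\tau}=e^{-j2\pi qm}=1$ in the sum defining the AF. Hence every alias $\tau_E + m/(\eta_f\Delta f)$ yields exactly the same likelihood as the true delay. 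The ML criterion cannot distinguish them, so each is a ghost estimate.

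\textbf{Step 2: admissible indices.} Next I will intersect the candidate set with the admissible delay interval $\hat\tau_{E,m}\in[0,1/\Delta f]$, taking the closure as in the theorem's floor/ceiling convention. The condition $0\le \tau_E+m/(\eta_f\Delta f)$ gives $m\ge -\eta_f\tau_E\Delta f$. The condition $\tau_E+m/(\eta_f\Delta f)\le 1/\Delta f$ gives $m\le \eta_f(1-\tau_E\Delta f)$. Since $m$ is an integer, these become the ceiling and floor bounds in~\eqref{eq_index_set_range}. Removing $m=0$, which is the true target, leaves the ghost set $\mathcal{M}_g(\tau_E)$.

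\textbf{Step 3: non-emptiness.} Finally I will remark that the interval for $m$ has length $\eta_f$. So when $\eta_f>1$ (with $\eta_f\ge 2$ an integer), it contains at least two integers, and at least one of them is nonzero. Thus ghosts always exist.

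\textbf{Main obstacle.} The delicate step is Step 1: it must be justified that these aliases are \emph{exact ties} of the likelihood, not merely local maxima. The proof must also handle the boundary convention, since the theorem's bounds include the endpoints $0$ and $1/\Delta f$, while the operating range in Sec.~\ref{sec_sys_model} is the open interval. I would handle ties through the periodicity of $\mathbf{d}_e(\tau)$ with period $1/(\eta_f\Delta f)$. This periodicity gives $\mathbf{d}_e(\tau_E)=\mathbf{d}_e(\hat\tau_{E,m})$ exactly, so the noiseless received signal in~\eqref{eq_Eve} is identical under $\tau_E$ and under each $\hat\tau_{E,m}$. That is a stronger statement than the AF argument, and it directly establishes non-identifiability under ML.
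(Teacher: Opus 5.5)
Your proposal is correct and follows the route the paper intends, which mirrors its proof of Theorem~\ref{thm_ghost_target_location}: unit-height aliased peaks of $\psi_{ER}$ at $m/(\eta_f\Delta f)$ that tie under ML, then intersection with $[0,1/\Delta f]$ to obtain the ceiling/floor index set, which is nonempty for $\eta_f\ge 2$. Your Dirichlet-kernel bound and the exact periodicity $\mathbf{d}_e(\tau+m/(\eta_f\Delta f))=\mathbf{d}_e(\tau)$ tighten the paper's heuristic peak argument; the periodicity makes the likelihood itself periodic in the candidate delay, so the ties hold even with noise, provided the carrier phase $e^{-j2\pi f_c\tau}$ is absorbed into the unknown gain $\bar{\beta}_e$.
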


\begin{proof}
Similar to proof of Theorem~\ref{thm_ghost_target_location}. Skipped for brevity.
\end{proof}
Thus, Theorem~\ref{thm_range_ambiguity} shows how for any true value of $\tau_E$ and the location of an Eve, by using sparse pilots that are placed periodically over the OFDM SCs, Eve will always observe multiple ghost range values of the target.

In summary, this section has demonstrated that sparse arrays and periodically placed sparse pilots induce controllable impairments in the range-angle MIMO AF observed at Eve. Through the aliasing introduced by SD and FD sub-sampling, multiple indistinguishable ghost estimates arise in both the angular and range dimensions, and protects the legitimate sensing operation. A remaining question to investigate is whether and how these ambiguities can also translate in securing the target positioning itself, and we answer this next. 

\section{Characterization of Positioning Privacy via Sparse Arrays and Sparse Pilots}
Given estimates of the DoD, DoA, and range, Eve can exploit the sensing geometry to infer the target location and ultimately determine its position coordinates. The next result establishes a sufficient condition under which sparse arrays and sparse pilots can impose ambiguity on the positioning process.

\begin{theorem}\label{thm_localization_ambiguity}
Consider a CC-ISAC system described in Sec.~\ref{sec_sys_model} with target parameters $(\theta_0,\theta_E,\tau_E)$. Let $\mathbf{p}_{BS}, \mathbf{p}_E \in \mathbb{R}^2$ denote the locations of the BS and Eve, respectively. A sufficient condition for positioning ambiguity at Eve is that there exist $n,m \in \mathbb{Z} \setminus \{0\}$ and $\beta \in \mathbb{R}_+$ satisfying
\begin{equation}\label{eq_localization_amb_eqn}
	\mathbf{e}_2^T \big(\mathbf{p}_E + \beta \mathbf{u}_E - \mathbf{p}_{BS}\big)
	=
	\left(\sin(\theta_0) + \frac{2n}{\eta_s}\right)
	\left(c\tau_E + \frac{c m}{\eta_f \Delta f} - \beta\right),
\end{equation}
where $\mathbf{u}_E = [\cos\theta_E, \sin\theta_E]^T$ and $\mathbf{e}_2 = [0,1]^T$. Under this condition, Eve cannot uniquely determine the target location.
\end{theorem}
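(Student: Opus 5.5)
Our plan is to build, from the ghost estimates in Theorem~\ref{thm_ghost_target_location} and Theorem~\ref{thm_range_ambiguity}, a candidate target position that Eve's ML processor cannot distinguish from the true one. The candidate will be geometrically consistent with all three aliased measurements (DoD, DoA, bistatic range). Since the AF in Lemma~\ref{lem_AF_Eve_expression} is separable, $\psi_{BS}$ and $\psi_{ER}$ attain their peak value $1$ at $\Delta_s=2n/\eta_s$ and $\Delta_\tau=m/(\eta_f\Delta f)$ (Lemmas~\ref{lem_ambiguity_function} and~\ref{lem_range_AF_peaks}). Hence the tuple $(\hat\theta_{0,n},\theta_E,\hat\tau_{E,m})$ produces the same matched-filter output as the true tuple. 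The task is to show that, under \eqref{eq_localization_amb_eqn}, this tuple corresponds to a genuine point in the plane. That point must differ from the true target, so Eve's geometric inversion is not unique.

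First, we fix the geometry. We place the BS array axis so that $\sin\theta$ of the DoD is the $\mathbf{e}_2$-component of the unit direction from $\mathbf{p}_{BS}$; this matches the paper's convention that $\Delta_s$ lives in sine space. We parametrize a candidate point along Eve's DoA ray as $\mathbf{q}=\mathbf{p}_E+\beta\mathbf{u}_E$ with $\beta>0$, so the Eve-to-target leg has length $\beta$ and the DoA equals $\theta_E$ automatically. The measured bistatic TOF fixes the total path length $\|\mathbf{q}-\mathbf{p}_{BS}\|+\beta=c\hat\tau_{E,m}$, so the BS leg is $r_0=c\tau_E+cm/(\eta_f\Delta f)-\beta$. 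For $\mathbf{q}$ to lie along the ghost DoD $\hat\theta_{0,n}$, we need $\mathbf{e}_2^T(\mathbf{q}-\mathbf{p}_{BS})=r_0\sin\hat\theta_{0,n}$. Substituting $\sin\hat\theta_{0,n}=\sin\theta_0+2n/\eta_s$ from \eqref{eq_ghost_target_location}, this condition is exactly \eqref{eq_localization_amb_eqn}.

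Next, we verify that such a $\mathbf{q}$ is fully consistent, and this is the step we expect to be the main obstacle. The projection identity fixes only one coordinate, so we must also show $\|\mathbf{q}-\mathbf{p}_{BS}\|=r_0$ with $r_0>0$. We also need $|\sin\theta_0+2n/\eta_s|\le1$, i.e.\ $n\in\mathcal{N}_g(\theta_0)$, and $\hat\tau_{E,m}\in(0,1/\Delta f)$, i.e.\ $m\in\mathcal{M}_g(\tau_E)$. Our first attempt will interpret \eqref{eq_localization_amb_eqn} as one equation in the unknown $\beta$. We would then add the companion $\mathbf{e}_1$-coordinate relation, namely that the cosine component is determined up to sign once the sine and range are fixed. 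Because the array is linear, there is a front/back ambiguity, so the sign can be chosen so that the norm equals $r_0$. If this turns out to be too loose, we will state the theorem's ambiguity in the weaker but sufficient sense: the DoA ray at $\beta$ and the ghost DoD ray from the BS satisfy the sine-projection and range-sum measurement equations identically. Eve's estimator, which inverts exactly these equations, then returns $\mathbf{q}$.

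Finally, we show that $\mathbf{q}$ differs from the true position $\mathbf{p}_T$. Since $n\neq0$, the sine of the DoD at $\mathbf{q}$ differs from $\sin\theta_0$, so $\mathbf{q}\neq\mathbf{p}_T$. Both points give identical AF peak values, so the ML criterion cannot prefer one over the other. Eve therefore cannot uniquely determine the target location, which completes the sufficiency argument.
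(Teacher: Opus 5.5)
\textbf{There is a genuine gap.} Your derivation is the paper's own: parametrize the candidate on Eve's DoA ray as $\mathbf{q}=\mathbf{p}_E+\beta\mathbf{u}_E$, read the BS leg off the ghost bistatic sum as $r_0=c\tau_E+cm/(\eta_f\Delta f)-\beta$, and impose the ghost DoD through the sine projection. You are right that this only gives a \emph{necessary} condition. With $\mathbf{v}=\mathbf{q}-\mathbf{p}_{BS}$ and $C=\sin\theta_0+2n/\eta_s$, \eqref{eq_localization_amb_eqn} enforces $\mathbf{e}_2^T\mathbf{v}=Cr_0$ but not $\|\mathbf{v}\|_2=r_0$.

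Your repair does not close this. Once $\beta$ is chosen, $\mathbf{q}$ is a fixed point, so there is no sign left to choose. In any case $\|\mathbf{v}\|_2$ is invariant under flipping $\mathbf{e}_1^T\mathbf{v}$. The front/back ambiguity of the linear array only says that one sine value corresponds to two rays from the BS; it does not move $\mathbf{q}$ onto the ellipse.

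Given the projection identity, norm consistency is equivalent to $(\mathbf{e}_1^T\mathbf{v})^2=(1-C^2)r_0^2$. This is a second scalar equation in the same single unknown $\beta$. Write $A=\mathbf{e}_2^T(\mathbf{p}_E-\mathbf{p}_{BS})$, $B=\mathbf{e}_2^T\mathbf{u}_E$ and $D=c\tau_E+cm/(\eta_f\Delta f)$. The unique root $\beta=(CD-A)/(B+C)$ of the first equation generically violates the second.

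Your fallback is only a relabeling, because the range-sum equation \emph{is} $\|\mathbf{q}-\mathbf{p}_{BS}\|_2+\beta=D$, which is exactly what is unproven. Your distinctness step inherits the same gap. The DoD sine at $\mathbf{q}$ is $Cr_0/\|\mathbf{v}\|_2$, not $C$, so $n\neq 0$ does not by itself separate $\mathbf{q}$ from $\mathbf{p}_T$.

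A concrete instance shows the statement cannot be proved from \eqref{eq_localization_amb_eqn} alone:
\begin{itemize}
\item Take $\mathbf{p}_{BS}=[0,0]^T$, $\mathbf{p}_E=[200,0]^T$ and a target at $[200,100]^T$. Then $\theta_E=90^\circ$, $\sin\theta_0\approx 0.447$ and $c\tau_E\approx 323.6$ m.
\item Let $\eta_s=4$, $n=m=1$, with ghost-range spacing $c/(\eta_f\Delta f)=10$ m. The equation gives $\beta\approx 162.3>0$.
\item However, $\|\mathbf{v}\|_2\approx 257.6$ while $r_0\approx 171.3$.
\item The DoD sine at $\mathbf{q}$ is $\approx 0.63\notin\{\sin\theta_0+2n'/\eta_s\}$, and the bistatic range $\approx 419.8$ m is off the ghost-range grid.
\end{itemize}
So $\mathbf{q}$ is not a ghost at all.

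The paper's appendix has exactly the same hole. It derives \eqref{eq_localization_amb_eqn} from the ellipse and DoD constraints and then asserts the converse without justification.

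The missing ingredient is to keep the ellipse constraint in the hypothesis. Require, in addition, $\|\mathbf{p}_E+\beta\mathbf{u}_E-\mathbf{p}_{BS}\|_2=c\tau_E+cm/(\eta_f\Delta f)-\beta>0$, together with $n\in\mathcal{N}_g(\theta_0)$ and $m\in\mathcal{M}_g(\tau_E)$. Under this augmented condition your argument goes through verbatim. The point $\mathbf{q}$ lies on the DoA ray and on the $m$th ellipse, and its DoD sine is exactly $C\neq\sin\theta_0$, so $\mathbf{q}\neq\mathbf{p}_T$. It also yields the same AF peak as the true target.
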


\begin{proof}
See Appendix~\ref{app_Position_AMB_Cond}.
\end{proof}

We illustrate Theorem~\ref{thm_localization_ambiguity} pictorially in Fig.~\ref{fig_localization_figure}. The theorem reveals the role of sparse arrays in positioning privacy. Range-domain ambiguity alone does not guarantee positioning ambiguity; otherwise,  the true target position can still be identified by triangulating the uniquely estimated DoA and DoD with the candidate range-ghost estimates. A similar argument holds when only spatial-domain ambiguity is present. Thus, positioning privacy emerges through the presence of both range- and SD ambiguities. Next, we characterize the conditions under which the criterion in Theorem~\ref{thm_localization_ambiguity} is satisfied.

\vspace{-0.2cm}
\begin{figure}
	\vspace{-0.6cm}
	\centering
	\includegraphics[width=0.95\linewidth]{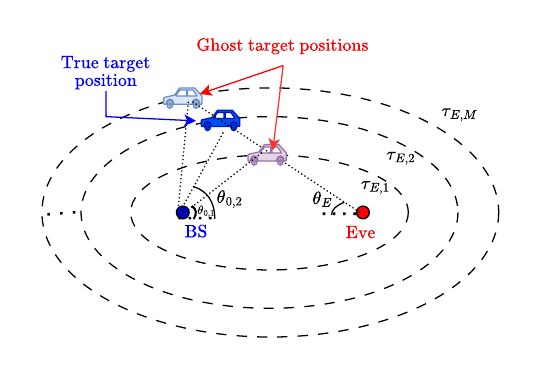}
	\caption{Illustration of Theorem~\ref{thm_localization_ambiguity}: Different delay ambiguities generate multiple bistatic ellipses, while spatial aliasing at the BS produces multiple DoD rays. Since the DoA at Eve is uniquely estimated, all feasible target locations must lie on the same DoA ray. Ghost target locations arise whenever a DoD ray intersects a delay ellipse at a point lying on this DoA ray. The existence of such intersections is characterized by the non-zero integers $n,m$ in Theorem~\ref{thm_localization_ambiguity}.} 
	\vspace{-0.2cm}
	\label{fig_localization_figure}
\end{figure}

\begin{proposition}\label{prop_existance_of_position_error}
For sufficiently high spatial and frequency sub-sampling factors $\eta_s$ and $\eta_f$, there exists at least one pair $(n,m) \in  \mathbb{Z}^2$, $n \neq 0, m \neq 0$ that satisfies the condition in Theorem~\ref{thm_localization_ambiguity}, ensuring sensing privacy in CC-ISAC systems.
\end{proposition}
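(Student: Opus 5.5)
The plan is to read the condition of Theorem~\ref{thm_localization_ambiguity} as a scalar equation that is \emph{linear} in $\beta$ for every fixed pair $(n,m)$. Then, as $\eta_s,\eta_f\to\infty$, the aliased DoD sines and aliased bistatic path lengths become arbitrarily close to the true ones. A continuity argument around the true target, i.e.\ the pair $(n,m)=(0,0)$, should then produce a nonzero pair with an admissible $\beta$.

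We introduce the shorthand
\begin{equation*}
y_0 \triangleq \mathbf{e}_2^T(\mathbf{p}_E-\mathbf{p}_{BS}),\quad s_n \triangleq \sin(\theta_0)+\frac{2n}{\eta_s},\quad D_m \triangleq c\tau_E+\frac{cm}{\eta_f\Delta f}.
\end{equation*}
With this notation the condition in Theorem~\ref{thm_localization_ambiguity} reads $y_0+\beta\sin\theta_E = s_n(D_m-\beta)$. Whenever $\sin\theta_E+s_n\neq 0$, it has the unique solution
\begin{equation*}
\beta_{n,m} = \frac{s_nD_m-y_0}{\sin\theta_E+s_n}.
\end{equation*}
For $(n,m)=(0,0)$ the equation is exactly the geometric consistency of the true target. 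Hence $\beta_{0,0}=\beta^\star$ is the true Eve--target distance, and $\beta^\star\in(0,D_0)$ because the BS--target leg $D_0-\beta^\star$ is positive. I would first record that $\sin\theta_E+s_0\neq0$ in the non-degenerate geometry, namely when the DoD and DoA rays are not parallel, which the true target location forces. I would then state that the map $(s,D)\mapsto (sD-y_0)/(\sin\theta_E+s)$ is continuous near $(s_0,D_0)$.

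Next, I would choose $n\in\{+1,-1\}$ and $m\in\{+1,-1\}$ so that the ghosts are physically admissible. The sign of $n$ is chosen so that $s_n\in[-1,1]$, which is possible once $\eta_s>2$, since at least one of $\sin\theta_0\pm 2/\eta_s$ stays in $[-1,1]$. Equivalently, $n\in\mathcal{N}_g(\theta_0)$ from Theorem~\ref{thm_ghost_target_location}. The sign of $m$ is chosen so that $D_m/c\in(0,1/\Delta f)$, which is possible once $\eta_f>2$. Equivalently, $m\in\mathcal{M}_g(\tau_E)$ from Theorem~\ref{thm_range_ambiguity}. For these choices, $|s_n-s_0|=2/\eta_s$ and $|D_m-D_0|=c/(\eta_f\Delta f)$, and both vanish as $\eta_s,\eta_f$ grow. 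By continuity, $\beta_{n,m}\to\beta^\star$, and $D_m-\beta_{n,m}\to D_0-\beta^\star>0$. Consequently, for $\eta_s,\eta_f$ larger than thresholds that depend on the margins $\beta^\star$, $D_0-\beta^\star$ and $|\sin\theta_E+s_0|$, we obtain $\beta_{n,m}\in\mathbb{R}_+$ with a positive BS leg. This gives a pair $n\neq0$, $m\neq0$ satisfying~\eqref{eq_localization_amb_eqn}, and Theorem~\ref{thm_localization_ambiguity} then yields the claim. Explicit thresholds could be extracted from a Lipschitz bound on the ratio above, but this is routine.

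The step I expect to be the main obstacle is the degenerate geometry, where $\sin\theta_E+s_0$ is zero or very small, or where the target lies essentially on the BS--Eve baseline so that $\beta^\star$ or $D_0-\beta^\star$ is near $0$. In that case the continuity margins collapse and the thresholds blow up. I would first try to exclude these configurations as measure-zero, in line with the "sufficiently high" wording of the proposition. Failing that, I would use a larger $|n|$ instead of $\pm1$: because $\{s_n\}$ becomes a $2/\eta_s$-dense grid of $[-1,1]$ and $\{D_m\}$ a $c/(\eta_f\Delta f)$-dense grid of $(0,c/\Delta f)$, one can select a grid point $(s_n,D_m)$ close to any nondegenerate $(s,D)$ on the feasible curve through the true solution and run the same continuity argument there. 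A secondary point to check is that the resulting ghost is genuinely distinct from the true position. This holds because $n\neq0$ changes the DoD, so the intersection point on the DoA ray differs from the true target.
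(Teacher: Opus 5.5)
Your argument proves the Proposition for the condition as literally stated in Theorem~\ref{thm_localization_ambiguity}, but by a different route from the paper.

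\textbf{The paper's route.} It writes the condition as $A+\beta B=C(D-\beta)$ (your $y_0$, $\sin\theta_E$, $s_n$, $D_m$) and solves $\beta=(CD-A)/(B+C)$. It treats $C,D$ as effectively continuous, since their grid spacings $2/\eta_s$ and $c/(\eta_f\Delta f)$ shrink. A four-case sign analysis on $C$ and $B+C$ then shows that, for any ghost ray with $B+C\neq 0$, some $D$ on the appropriate side of $A/C$ gives $(CD-A)(B+C)>0$. This global argument never anchors at the true target, so it does not care whether $B+\sin\theta_0$ vanishes. However, it lets $C$ and $D$ range freely. It never checks $C\in[-1,1]$, $D\in(0,c/\Delta f)$, or $D-\beta>0$; for instance, $D<A/C$ with $A/C\le 0$ forces $D<0$.

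\textbf{What your route buys.} Your local perturbation around $(n,m)=(0,0)$ with $n,m\in\{\pm1\}$ delivers exactly these admissibility properties: $n\in\mathcal N_g(\theta_0)$, $m\in\mathcal M_g(\tau_E)$, and both legs positive. It also gives explicit thresholds. It makes the hypothesis ``sufficiently high'' meaningful. With unrestricted nonzero integers the condition holds for every $\eta_s,\eta_f$: take $n$ large so that $C>0$ and $B+C>0$, then $m$ large so that $D>A/C$.

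\textbf{Two corrections.}

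First, $\sin\theta_E+\sin\theta_0=0$ is not the parallel-ray condition, and the true geometry does not rule it out. Take $\mathbf p_{BS}=[0,0]^T$, $\mathbf p_E=[0,2]^T$ and a target at $[1,1]^T$. The two rays are perpendicular, yet $\sin\theta_0=-\sin\theta_E=1/\sqrt2$, and the $(0,0)$ equation holds for every $\beta$. The map $(s,D)\mapsto(sD-y_0)/(\sin\theta_E+s)$ is then $0/0$ at $(s_0,D_0)$, so your continuity step fails. Your second fallback repairs this: pick another nondegenerate, admissible point of the open set where this ratio is positive, and use grid density. That fallback is essentially the paper's proof. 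The measure-zero fallback, by contrast, would leave such geometries unproved.

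Second, your distinctness remark does not follow. The scalar equation uses the ellipse constraint only through the substitution $\|\mathbf p_T-\mathbf p_{BS}\|_2=D_m-\beta$; it does not impose that constraint. Hence $\mathbf p_E+\beta_{n,m}\mathbf u_E$ need not have DoD $s_n$ or bistatic path length $D_m$. In your construction it actually converges to the true target, since $\beta_{n,m}\to\beta^\star$. This does not affect the Proposition as stated, which only asks for the scalar condition. But you should not claim from it a ghost genuinely distinct from the true position.
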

\begin{proof}
 See Appendix~\ref{app_Position_Amb_satisfy}.
\end{proof}

\section{Performance of Legitimate ISAC}\label{sec_legitimate_ISAC}
In this section, we analyze the impact of sub-sampling on the legitimate sensing performance at Alice and communication performance at a UE. To this end, we first present the range-angle MIMO-AF at Alice in the following result.
\begin{lemma}\label{lem_AF_Alice_expression}
For the ISAC system described in Sec.~\ref{sec_sys_model}, the joint range-angle MIMO ambiguity function at Alice, associated with target parameters $(\boldsymbol{\theta}_0,\boldsymbol{\theta}_E,\boldsymbol{\tau}_E)$
and evaluated at the reference parameters $(\theta_0',\theta_A',\tau_A')$, can be expressed as
\vspace{-0.1cm}
\begin{multline}
	\psi_A(\boldsymbol{\theta}_0,\theta_0',\boldsymbol{\theta}_A,\theta_A',\boldsymbol{\tau}_A,\tau_A')  = \\ \sum_{k=1}^{K_s}\left|\frac{\sin\left(\frac{\pi N_r}{2}\Delta^k_{\theta_A',\theta_A}\right)}{N_r\sin\left(\frac{\pi}{2}\Delta^k_{\theta_A',\theta_A}\right)}\right| \times \left|\frac{\sin\left({\pi N_s\Delta f}\Delta^k_{\tau_E',\tau_A}\right)}{N_s\sin\left({\pi\Delta f}\Delta^k_{\tau_E',\tau_A}\right)}\right|,
\end{multline}
where all symbols are as defined in Lemma~\ref{lem_AF_Eve_expression}.
\end{lemma}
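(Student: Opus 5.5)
The plan is to mirror the derivation of Lemma~\ref{lem_AF_Eve_expression} (Appendix~\ref{app_Eve_AF}), now applied to Alice's STF model in~\eqref{eq_Alice_signal_model}. There are three differences. First, Alice is collocated with the BS, so $\theta_A^k=\theta_0^k$. Second, she has access to all $N_s$ subcarriers, so the frequency-steering vector is $\mathbf{d}_a(\tau)\in\mathbb{C}^{N_s}$ rather than $\mathbf{d}_e(\tau)$. Third, she knows both pilots and data, so she can use the full $\mathbf{X}$.

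I will define the AF as the normalized magnitude of the matched-filter output. The receive vector is $\tilde{\mathbf{a}}_{N_r}(\theta_A')$, the reference signal is $\mathbf{a}^H_{N_t}(\theta_0';\eta_s)\mathbf{X}\odot(\mathbf{d}_a(\tau_A')\otimes\mathbf{1}_L)^T$, and this filter is correlated with the noiseless part of $\mathbf{Y}_a$, exactly as for Eve. Because the model is linear in the targets, the output splits into a sum over $k$ of per-target terms; as in the Eve case, the cross terms are bounded by the triangle inequality, and this yields the superposition structure. For each target, the receive-array factor $\tilde{\mathbf{a}}^H_{N_r}(\theta_A')\tilde{\mathbf{a}}_{N_r}(\theta_A^k)/N_r$ is a geometric sum, which gives the first Dirichlet kernel directly.

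The remaining factor is $\sum_{n,\ell}(\mathbf{a}^H(\theta_0')\mathbf{x}_{n,\ell})^*(\mathbf{a}^H(\theta_0^k)\mathbf{x}_{n,\ell})\,e^{j2\pi n\Delta f(\tau_A'-\tau_A^k)}$. To evaluate it, I will write it as $\sum_n \mathbf{a}^H(\theta_0^k)\big(\sum_\ell \mathbf{x}_{n,\ell}\mathbf{x}_{n,\ell}^H\big)\mathbf{a}(\theta_0')\,e^{j2\pi n\Delta f\Delta_\tau}$. The inner sum over $\ell$ equals $(L/N_t)\mathbf{I}_{N_t}$ on both kinds of subcarrier:
\begin{itemize}
\item on data SCs this holds in expectation, via $\boldsymbol{\Sigma}_{X,n}=\mathbf{I}_{N_t}/N_t$, or asymptotically in $L$;
\item on pilot SCs it holds exactly, by the cyclic use of the columns of $\boldsymbol{\Phi}$ with $\boldsymbol{\Phi}\boldsymbol{\Phi}^H=\mathbf{I}$, up to scaling and taking $L$ a multiple of $N_t$.
\end{itemize}
Since the covariance is the same on every subcarrier, the transmit factor $\mathbf{a}^H_{N_t}(\theta_0^k;\eta_s)\mathbf{a}_{N_t}(\theta_0';\eta_s)$ is common to all $n$ and can be pulled out. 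What remains is $\sum_{n=0}^{N_s-1}e^{j2\pi n\Delta f\Delta_\tau}$ over all $N_s$ contiguous subcarriers, which gives the second Dirichlet kernel with $N_s$ and spacing $\Delta f$ and no $\eta_f$. This is the key contrast with $\psi_{ER}$.

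The step I expect to be the main obstacle is the transmit factor. Lemma~\ref{lem_AF_Alice_expression} has no $\psi_{BS}$ term, so I must justify why the transmit array drops out. My first approach is to use the monostatic geometry, $\theta_A=\theta_0$, so that Alice's MF steers a single angle hypothesis $\theta_0'=\theta_A'$. Under this joint hypothesis the transmit factor is a second kernel in the same mismatch, and with the normalization adopted for Alice (whitening by the transmit covariance, so that the virtual-array gain is absorbed) it reduces to a constant. I would state this normalization choice explicitly. The grating lobes of $\psi_{BS}$ are then harmless, because the Nyquist receive array kernel, with mismatch in $(-2,2)$ and $\lambda_0/2$ spacing, has a unique peak.

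Finally, I will take magnitudes and normalize by $N_rN_sL$ (times the transmit gain) to arrive at the stated product form. I will also note that the index in the delay mismatch should read $\Delta^k_{\tau_A',\tau_A}$.
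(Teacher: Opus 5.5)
Apart from the transmit factor, your route matches the paper's proof. You use the same matched filter built from the full $\mathbf{X}$. You have $\sum_{\ell}\mathbf{x}_{n,\ell}\mathbf{x}_{n,\ell}^H=(L/N_t)\mathbf{I}_{N_t}$ exactly on pilot SCs and via the law of large numbers on data SCs. And a contiguous geometric sum over all $N_s$ SCs gives an $\eta_f$-free delay kernel. Your remark that the delay mismatch should read $\Delta^k_{\tau_A',\tau_A}$ is also correct.

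\textbf{The gap.} It is where you expected it. After the covariance is substituted, the per-target transmit factor is $|\mathbf{a}^H_{N_t}(\theta_0';\eta_s)\mathbf{a}_{N_t}(\theta_0^k;\eta_s)|/N_t=\psi_{BS}(\theta_0',\theta_0^k,\eta_s)$. Whitening cannot make this constant. Since $\boldsymbol{\Sigma}_{X,n}=\mathbf{I}_{N_t}/N_t$ is a multiple of the identity, whitening only rescales. A normalization that does not depend on the hypothesis can absorb the peak gain $N_t$, but not the shape of the kernel. Under your joint hypothesis $\theta_0'=\theta_A'$, the factor becomes $\psi_{BS}(\theta_A',\theta_A^k,\eta_s)$. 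It depends on the same mismatch as the receive kernel, equals $1$ at zero mismatch and at every grating lobe $2n/\eta_s$, and decays elsewhere. Coherent transmit steering therefore gives the three-fold product (receive kernel)$\times\psi_{BS}\times$(delay kernel), a virtual-array pattern, not the two-fold product in the lemma. You are right that the Nyquist receive kernel is strictly below $1$ at the grating lobes. That suffices for unique identifiability at Alice, but it does not establish the stated formula.

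\textbf{How the paper handles it.} The paper does not compute this factor away; it removes it by a modelling step. In the monostatic geometry $\theta_0^k=\theta_A^k$, so the DoA alone carries all the spatial information. Alice's AF is therefore taken over $(\theta_A',\tau_A')$ only, and the transmit-side hypothesis is not part of it. If you replace the whitening argument with this explicit choice, the rest of your derivation yields the lemma.

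If you want a concrete receiver that realizes exactly this AF, proceed as follows. Deconvolve $\mathbf{X}$ on each SC (possible since $L\geq N_t$) to get the $N_r\times N_t$ effective channel $\sum_k\bar{\beta}^k_a e^{-j2\pi n\Delta f\tau^k_A}\tilde{\mathbf{a}}_{N_r}(\theta^k_A)\mathbf{a}^H_{N_t}(\theta^k_0;\eta_s)$. Then matched-filter a single column, or combine the columns noncoherently, over receive angle and delay. Every entry of $\mathbf{a}_{N_t}(\theta^k_0;\eta_s)$ has unit modulus, so the transmit contribution is then genuinely constant.

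A minor point: there are no cross terms to bound with the triangle inequality. The superposition $\sum_k|\cdot|$ is simply the definition of the AF, as it is for Eve.
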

\begin{proof}
See Appendix~\ref{app_Alice_AF}.
\end{proof}
Importantly, we observe from Lemma~\ref{lem_AF_Alice_expression} that the overall range-angle MIMO AF at Alice is independent of the sub-sampling factors $\eta_s$ and $\eta_f$, thanks to the monostatic structure of Alice, and the pilot- and data-based sensing of targets. 

Next, at the communication UE, assuming that the channel correlation across SCs remains the same across every pair of transmit-receiver antenna, we have the following result.
\begin{lemma}\label{lemma_CSI_error_variance}
	Consider the system model described in Sec.~\ref{sec_sys_model}. Let
	$\boldsymbol{\Sigma}$ denote the spatial covariance matrix of $\mathbf{H}_{c,n}$, i.e.,
	\begin{equation}
	\boldsymbol{\Sigma}
	=
	\mathbb{E}\!\left[
	\mathrm{vec}(\mathbf{H}_{c,n})
	\mathrm{vec}(\mathbf{H}_{c,n})^H
	\right] \ \in\mathbb{C}^{N_cN_t\times N_cN_t},
\end{equation}
	and let $r[m,\!n]$,\! $m,n\!\in\!\mathcal{N}_s$, be the correlation coefficient between the channels on SCs $m$, $n$. Assume that the channels on the pilot SCs are estimated using the LS method along with $L=N_t$ OFDM symbols. Assume furthermore that these LS estimates are used to obtain the LMMSE channel estimate, $\hat{\mathbf{h}}_{c,n}^{v}\triangleq \mathrm{vec}(\hat{\mathbf{H}}_{c,n})$ on a data SC-$n\in\mathcal{N}_s^d$. Then, the error covariance matrix of $\hat{\mathbf{h}}_{c,n}^{v}$ is formulated as
		\vspace{-0.1cm}
	\begin{equation}
			\vspace{-0.1cm}
		\mathbf{E}_n
		=
		r(0)\boldsymbol{\Sigma}
		-
		(\mathbf{r}_{np}^{T}\otimes\boldsymbol{\Sigma})
		\left(
		\mathbf{R}_{pp}\otimes\!\boldsymbol{\Sigma}
		+
		\sigma_w^2
		\mathbf{I}_{N_cN_tN_s^p}
		\right)^{-1}\!\!
		(\mathbf{r}_{np}^{*}\otimes\boldsymbol{\Sigma}),
	\end{equation}
	where we have $r(0)\triangleq r[0,0]$, and
		\vspace{-0.1cm}
	\begin{equation}
	\mathbf{r}_{np} \in \mathbb{C}^{N_s^p}: [\mathbf{r}_{np}]_i = r[n,m_i], \qquad i\in[N_s^p],
\end{equation}
	with $m_i$ denoting the $i$th pilot SC, represents the correlation between the $n$th data SC and all pilot SCs. Furthermore,
		\vspace{-0.1cm}
	\begin{equation}
			\vspace{-0.1cm}
	\mathbf{R}_{pp} \in \mathbb{C}^{N_s^p \times N_s^p}: 	[\mathbf{R}_{pp}]_{i,j}=r[m_i,m_j]
\end{equation}
	is the correlation matrix of all the pilot SCs.
\end{lemma}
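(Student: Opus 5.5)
The plan is to first derive the LS estimates on the pilot SCs and then use the standard LMMSE error-covariance formula with a Kronecker-structured correlation model.

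\textbf{LS estimates on the pilot SCs.} For a pilot SC $m_i\in\mathcal{N}_s^p$, stack the $L=N_t$ OFDM symbols. From~\eqref{eq_comm_rcvd_signal} and~\eqref{eq_pilot_signal}, the received block is $\mathbf{Y}_{c,m_i}=\mathbf{H}_{c,m_i}\boldsymbol{\Phi}+\mathbf{N}_{c,m_i}$, where the pilot columns are cyclically permuted, which is harmless. Since $\boldsymbol{\Phi}\boldsymbol{\Phi}^H=\mathbf{I}_{N_t}$, the LS estimate is $\hat{\mathbf{H}}^{LS}_{c,m_i}=\mathbf{Y}_{c,m_i}\boldsymbol{\Phi}^H=\mathbf{H}_{c,m_i}+\mathbf{N}_{c,m_i}\boldsymbol{\Phi}^H$. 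Because $\boldsymbol{\Phi}$ is unitary, the noise term $\mathrm{vec}(\mathbf{N}_{c,m_i}\boldsymbol{\Phi}^H)$ remains white with covariance $\sigma_w^2\mathbf{I}_{N_cN_t}$, where $\sigma_w^2=\sigma_c^2$ up to the pilot power scaling. The noise is also independent across pilot SCs. Stacking the vectorized LS estimates gives
\[
\mathbf{z}=\mathbf{h}_p+\mathbf{w}\in\mathbb{C}^{N_cN_tN_s^p},\qquad \mathbf{h}_p=[\mathrm{vec}(\mathbf{H}_{c,m_1})^T,\ldots,\mathrm{vec}(\mathbf{H}_{c,m_{N_s^p}})^T]^T.
\]

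\textbf{Second-order statistics.} Next I use the assumption that the frequency correlation is identical for every transmit--receive antenna pair. This means the cross-covariance between SCs $m$ and $n$ separates as
\[
\mathbb{E}[\mathrm{vec}(\mathbf{H}_{c,m})\mathrm{vec}(\mathbf{H}_{c,n})^H]=r[m,n]\boldsymbol{\Sigma}.
\]
Under this model,
\[
\mathbb{E}[\mathbf{h}_p\mathbf{h}_p^H]=\mathbf{R}_{pp}\otimes\boldsymbol{\Sigma},\qquad \mathbb{E}[\mathbf{h}_{c,n}^v\mathbf{h}_p^H]=\mathbf{r}_{np}^T\otimes\boldsymbol{\Sigma},\qquad \mathbb{E}[\mathbf{h}^v_{c,n}\mathbf{h}^{vH}_{c,n}]=r(0)\boldsymbol{\Sigma}.
\]
The remaining cross term $\mathbb{E}[\mathbf{h}_p\mathbf{h}^{vH}_{c,n}]$ is the Hermitian transpose of the second quantity, namely $\mathbf{r}_{np}^*\otimes\boldsymbol{\Sigma}$. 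Here I use $\boldsymbol{\Sigma}^H=\boldsymbol{\Sigma}$ and $r[m,n]^*=r[n,m]$. Because the noise $\mathbf{w}$ is independent of the channel, $\mathbb{E}[\mathbf{z}\mathbf{z}^H]=\mathbf{R}_{pp}\otimes\boldsymbol{\Sigma}+\sigma_w^2\mathbf{I}$ and $\mathbb{E}[\mathbf{h}^v_{c,n}\mathbf{z}^H]=\mathbf{r}_{np}^T\otimes\boldsymbol{\Sigma}$.

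\textbf{LMMSE estimator and error covariance.} The zero-mean LMMSE estimator is $\hat{\mathbf{h}}^v_{c,n}=\mathbf{C}_{hz}\mathbf{C}_{zz}^{-1}\mathbf{z}$. Its error covariance is $\mathbf{C}_{hh}-\mathbf{C}_{hz}\mathbf{C}_{zz}^{-1}\mathbf{C}_{zh}$. Substituting the covariances from the previous step yields exactly $\mathbf{E}_n$. The matrix $\mathbf{C}_{zz}$ is invertible because $\sigma_w^2>0$.

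\textbf{Main obstacle.} The delicate step is the conjugation and ordering bookkeeping in the cross-covariance. It has to come out as $\mathbf{r}_{np}^T\otimes\boldsymbol{\Sigma}$ rather than $\mathbf{r}_{np}^H\otimes\boldsymbol{\Sigma}$, which depends on the convention for $r[m,n]$. I will fix the convention $\mathbb{E}[h_m h_n^*]\propto r[n,m]$ so that the result matches the stated form. A second point to check is that the LS noise stays white after right-multiplying by $\boldsymbol{\Phi}^H$, which holds because $\boldsymbol{\Phi}^H$ is unitary.
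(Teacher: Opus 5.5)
Your argument is correct. The paper skips this proof ("skipped due to space constraints"), but the stated $\mathbf{E}_n$ is exactly $\mathbf{C}_{hh}-\mathbf{C}_{hz}\mathbf{C}_{zz}^{-1}\mathbf{C}_{zh}$ for white-noise LS observations on the pilot SCs under a space--frequency Kronecker covariance. So your route (LS via the unitary pilot block, stacking over pilot SCs, Kronecker second-order statistics, then the standard LMMSE error covariance) is clearly the intended one, and the computation in your main body checks out.

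\emph{Fix your closing remark.} The convention $\mathbb{E}[h_m h_n^*]\propto r[n,m]$ that you announce there is backwards. It would give $\mathbf{C}_{hz}=\mathbf{r}_{np}^H\otimes\boldsymbol{\Sigma}$ and $\mathbf{C}_{zz}=\mathbf{R}_{pp}^T\otimes\boldsymbol{\Sigma}+\sigma_w^2\mathbf{I}$. The convention that produces the stated form is the one you already use in the body, $\mathbb{E}[\mathrm{vec}(\mathbf{H}_{c,m})\mathrm{vec}(\mathbf{H}_{c,n})^H]=r[m,n]\boldsymbol{\Sigma}$. Drop the contradictory remark.

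\emph{Two small precision points.} First, the Kronecker form is a modeling assumption slightly stronger than the paper's wording. Equal SC correlation for every antenna pair only fixes the diagonal entries of each cross-SC block, so the full Kronecker structure has to be assumed. Second, writing the auto-covariance as $r(0)\boldsymbol{\Sigma}$ uses $r[n,n]=r[0,0]$. This holds because $\boldsymbol{\Sigma}$ is the same on every SC, which forces $r[n,n]=1$ for all $n$.
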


\begin{proof}
	Skipped due to space constraints.
\end{proof}
We are now ready to state the main result of this section.
\begin{theorem}\label{prop_legitimate_performance}
For the CC-ISAC system described in Sec.~\ref{sec_sys_model} having SD and FD sub-sampling factors $\eta_s$ and $\eta_f$ and a $N_t$-antenna BS and $N_s$-SC MIMO-OFDM-ISAC system,
\begin{enumerate}[leftmargin=*]
	\item The sensing performance at the legitimate receiver (Alice) remains identical to that of the Nyquist-sampled system, i.e., with $\eta_s=\eta_f=1$, and 
	\item The achievable communication throughput scales as
		\vspace{-0.1cm}
	\begin{equation}\label{eq_Rate_Comm_UE}
		R =
		\mathcal{O}\!\left\{
		\left(1-\frac{1}{\eta_f}\right)
		\sum_{k=1}^{\mathrm{rank}(\hat{\mathbf{H}}_{c,n})}\!\!\!\!
		\log_2\!\left(
		1+\frac{P}{N_tN_s}\mu_k
		\right)
		\right\},
	\end{equation}
	where $\{\mu_k\}_k$ denote the eigenvalues of the average Gram matrix $\frac{1}{N_s^d}
		\sum_{n\in\mathcal{N}_s^d}
		\mathbf{G}_{c,n}\mathbf{G}_{c,n}^{H}$. Here, we have $	\mathbf{G}_{c,n}=$
			\vspace{-0.2cm}
	\begin{equation}
		\left(
		\frac{1}{N_t}
		\sum_{i=1}^{N_t}
		(\mathbf{e}_i^{T}\!\otimes\!\mathbf{I}_{N_r})
		\mathbf{E}_n
		(\mathbf{e}_i^{T}\!\otimes\!\mathbf{I}_{N_r})^{H}
		+
		\sigma_w^2\mathbf{I}_{N_r}
		\right)^{-\frac{1}{2}}
		\hat{\mathbf{H}}_{c,n},
	\end{equation}
	where $\mathbf{e}_i\in\mathbb{C}^{N_t}$ is the $i$th canonical basis vector, $\hat{\mathbf{H}}_{c,n}$ is the LMMSE estimate of $\mathbf{H}_{c,n}$, and $\mathbf{E}_n$ is the estimation error covariance matrix characterized in Lemma~\ref{lemma_CSI_error_variance}.
\end{enumerate}

\end{theorem}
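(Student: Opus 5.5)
The plan splits into the two claims of Theorem~\ref{prop_legitimate_performance}: part~1 reduces to Lemma~\ref{lem_AF_Alice_expression}, and part~2 is a worst-case-noise rate bound built on Lemma~\ref{lemma_CSI_error_variance}.

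\textbf{Part 1.} Invoke Lemma~\ref{lem_AF_Alice_expression} directly. The range-angle MIMO-AF at Alice depends only on $N_r$, $N_s$ and the mismatches, and not on $\eta_s$ or $\eta_f$. Two features of the model make this true, and I will spell both out.
\begin{enumerate}
	\item Alice uses the full matrix $\mathbf{X}$, so the frequency-steering vector is $\mathbf{d}_a(\tau)$ over all $N_s$ SCs, not the decimated $\mathbf{d}_e(\tau)$.
	\item Under the monostatic geometry, the transmit factor $\mathbf{a}^H_{N_t}(\theta_0;\eta_s)\mathbf{X}$ is known to Alice. It enters the matched filter only through $\frac{1}{L}\mathbf{X}\mathbf{X}^H$, which by $\boldsymbol{\Sigma}_{X,n}=\frac{1}{N_t}\mathbf{I}_{N_t}$ and $\boldsymbol{\Phi}\boldsymbol{\Phi}^H=\mathbf{I}_{N_t}$ is a scaled identity. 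The transmit array thus contributes only the constant $\|\mathbf{a}_{N_t}(\theta_0;\eta_s)\|^2/N_t=1$, independent of $\eta_s$.
\end{enumerate}
Since the ML estimator and its resolution and CRB-type behaviour are governed by this matched-filter output, the estimation performance equals that of the system with $\eta_s=\eta_f=1$.

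\textbf{Part 2.} This is a standard rate bound with imperfect CSI. On each data SC $n\in\mathcal{N}_s^d$, write $\mathbf{H}_{c,n}=\hat{\mathbf{H}}_{c,n}+\tilde{\mathbf{H}}_{c,n}$, with the LMMSE error covariance $\mathbf{E}_n$ from Lemma~\ref{lemma_CSI_error_variance}. The received signal becomes $\hat{\mathbf{H}}_{c,n}\mathbf{x}_n+\tilde{\mathbf{H}}_{c,n}\mathbf{x}_n+\mathbf{n}$. I will treat the estimation-error term as uncorrelated additive noise, using the orthogonality of the LMMSE error. Its covariance is
\begin{equation*}
\sum_i(\mathbf{e}_i^T\otimes\mathbf{I})\mathbf{E}_n(\mathbf{e}_i^T\otimes\mathbf{I})^H\,\frac{P}{N_sN_t}.
\end{equation*}
I will absorb the power scaling into the normalization stated in the theorem. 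The worst-case Gaussian-noise argument then yields the lower bound
\begin{equation*}
\log_2\det\!\Big(\mathbf{I}+\tfrac{P}{N_tN_s}\mathbf{G}_{c,n}\mathbf{G}_{c,n}^H\Big)
\end{equation*}
on the per-SC rate, where $\mathbf{G}_{c,n}$ is the whitened effective channel.

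Next, I sum over $n\in\mathcal{N}_s^d$ and divide by $N_s$, following~\eqref{eq_througput}. I will use $N_s^d/N_s=1-N_s^p/N_s\approx 1-1/\eta_f$, which holds because $N_s^p=\lfloor (N_s-1)/\eta_f\rfloor+1$. Concavity of $\log\det$ (Jensen) lets me replace the SC-average of the log-dets by the log-det of the average Gram matrix, and diagonalizing the average Gram matrix with eigenvalues $\mu_k$ gives the sum over $k$.

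\textbf{Main obstacle.} The Jensen step goes the wrong direction for a lower bound: concavity gives average of log-dets $\le$ log-det of the average. This is why the theorem is stated only as an $\mathcal{O}\{\cdot\}$ scaling, and I plan to argue order-equivalence rather than a strict inequality. I will first argue that under the assumed equal correlation structure the $\mathbf{G}_{c,n}$ are statistically similar across $n$, so the Jensen gap is a lower-order term. The prelog factor $(1-1/\eta_f)$ is then the only sub-sampling cost, and it is the native pilot-overhead cost rather than an added trade-off.
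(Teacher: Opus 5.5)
Your route is the paper's. Part~1 is read off Lemma~\ref{lem_AF_Alice_expression}; the paper then adds a Fej\'er-kernel argument, with $N_r,N_s\to\infty$, to get a unique global maximum at zero mismatch over $\Omega_s\times\Omega_f$. Part~2 is the same worst-case-noise bound: $\tilde{\mathbf{H}}_{c,n}\mathbf{x}_{c,n}$ is treated as Gaussian interference, the channel is whitened by $\boldsymbol{\Sigma}_{n,IN}^{-1/2}$, and the prelog is $|\mathcal{N}_s^d|/N_s$.

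\textbf{The gap: item~2 of your Part~1 justification.} With per-SC covariance proportional to $\mathbf{I}_{N_t}$, the transmit factor in Alice's matched filter is $|\mathbf{a}^H_{N_t}(\theta_0';\eta_s)\mathbf{a}_{N_t}(\theta_0;\eta_s)|/N_t=\psi_{BS}(\theta_0',\theta_0,\eta_s)$. It is not the constant $\|\mathbf{a}_{N_t}(\theta_0;\eta_s)\|^2/N_t$. It equals $1$ at zero mismatch, but also at every grating lobe $2n/\eta_s$.
\begin{itemize}
\item Your computation is exactly the one in~\eqref{eq_AF_intermediate} for Eve. If it gave a constant, Eve would also be immune to the sparse array, contradicting Lemma~\ref{lem_ambiguity_function}.
\item What actually removes the $\eta_s$-dependence is the monostatic geometry. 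Since $\theta_0=\theta_A$, Alice only needs the DoA, and her $\lambda_0/2$-spaced receive array resolves it uniquely. This is how Appendix~\ref{app_Alice_AF} drops the transmit factor.
\item Even if Alice steers the transmit factor jointly with $\theta_0'=\theta_A'$, it only multiplies the receive factor by $\psi_{BS}\le 1$. The Nyquist receive factor is strictly below $1$ at every nonzero mismatch in $\Omega_s$, so no additional global maxima appear.
\end{itemize}
Your item~1 correctly explains the $\eta_f$-independence. One caveat: on data SCs the scaled-identity covariance holds only asymptotically in $L$, by the law of large numbers, as the paper notes.

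\textbf{Part~2: you are, if anything, more careful than the paper.}
\begin{itemize}
\item The paper performs no Jensen step. It goes from the per-SC singular values $\mu_{k,n}$ of $\mathbf{G}_{c,n}$ directly to the order-wise claim stated with the averaged-Gram eigenvalues $\mu_k$. So the wrong-direction issue you flag is a looseness in the statement itself.
\item Your order-equivalence argument is at the same heuristic level as the paper's. Note that $\mathbf{E}_n$ depends on where SC~$n$ sits relative to the pilot grid, so the $\mathbf{G}_{c,n}$ are comparable rather than identically distributed.
\item The $P/(N_sN_t)$ scaling you obtain for the interference covariance is the consistent one. The $1/N_t$ inside $\mathbf{G}_{c,n}$ reflects a normalization mismatch in the statement, so absorbing it is harmless at the order-wise level.
\end{itemize}
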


\begin{proof}
	See Appendix~\ref{app_legitimate_performance}.
\end{proof}
We now make the following important observations: 
\begin{enumerate}[leftmargin=*]
	\item \textbf{Frequency-domain sub-sampling:}
	Theorem~\ref{prop_legitimate_performance} reveals the following fundamental trade-off: On one hand, increasing $\eta_f$ improves sensing privacy and reduces pilot overhead, thereby increasing the pre-log factor in~\eqref{eq_Rate_Comm_UE}, which enhances the achievable communication rate. On the other hand, a higher $\eta_f$ degrades channel prediction accuracy on data SCs (particularly for those farther from the pilot SCs), since the inter-channel correlation decreases with distance between SCs. Specifically, we can argue that the $\|\mathbf{E}_n\|_F$ in Lemma~\ref{lemma_CSI_error_variance} increases with $\eta_f$, which in turn reduces the effective gains $\{\mu_k\}_k$ in~\eqref{eq_Rate_Comm_UE}. Therefore, the optimal choice of $\eta_f$ depends on the frequency selectivity of the UE's channel. Importantly, this trade-off, however, exists even in the absence of Eve and it is inherent to OFDM communication. Thus, the proposed sensing-privacy mechanism does not introduce any additional trade-offs beyond those already present in fully-private legitimate ISAC systems.
	\item \noindent\textbf{Spatial-domain sub-sampling:}
	Increasing $\eta_s$ extends the inter-antenna spacing and consequently reduces spatial correlation. As noted in~\cite{Shiu_TCOM_2000}, lower spatial correlation is generally beneficial for achieving a higher communication throughput. However, deriving a precise throughput scaling law with respect to $\eta_s$ is beyond the scope of this paper.
\end{enumerate}

\begin{remark}
The proposed framework also extends to the complementary setting, where Alice operates in a multistatic configuration, while Eve is located in close proximity to the BS and therefore observes an approximately monostatic geometry. In this case, Alice can cooperate with the ISAC BS to jointly process pilot and data SCs for target sensing, thereby resolving range-domain ambiguities and uniquely positioning the target despite DoD ambiguities. By contrast, although Eve's near-monostatic geometry may largely eliminate DoD ambiguities, it still suffers from range-domain ambiguities, leading to substantial positioning errors. Hence, the proposed sensing-privacy framework remains effective, regardless of the radar geometries available to Alice and Eve.
\end{remark}
In the next section, we perform Monte-Carlo simulations and numerically characterize our findings. 

\vspace{-0.2cm}
\section{Numerical Results}
Unless stated otherwise, the system parameters are set as follows: the BS, Alice, and Eve employ $N_t=N_r=N_e=16$ antennas, while the communication UE has $N_c=4$ antennas. The OFDM setup consists of $N_s=512$ SCs with the SC BW given by $\Delta f = 120$ kHz. The target DoDs are sampled as $\theta_0 \sim \mathcal{U}[-60^\circ,60^\circ]$, and the delays are sampled from $\mathcal{U}(0,1/\Delta f)$. The transmit power and noise variance are $P=15$ dBm and $\sigma^2=-110$ dBm, respectively. The link's pathloss is modeled as $\beta_{PL} = C_0 d^{-2}$, where $C_0=-30$ dB is the reference pathloss and $d$ is the link distance. Furthermore, the channels at the UE undergo Rayleigh fading with $\boldsymbol{\Sigma} = \mathbf{I}_{N_cN_t}$. The sensing performance is evaluated using the root-mean-square error (RMSE) of the estimated target parameter, while the communication performance is evaluated using the sum-rate.
\vspace{-0.15cm}
\subsection{Spatial Transmit-MIMO Ambiguity Function at Eve}
\begin{figure}[t]
	\vspace{-0.8cm}
\centering
\hspace{-0.5cm}
\begin{subfigure}{0.49\linewidth}
	\centering
	\includegraphics[width=1.16\linewidth]{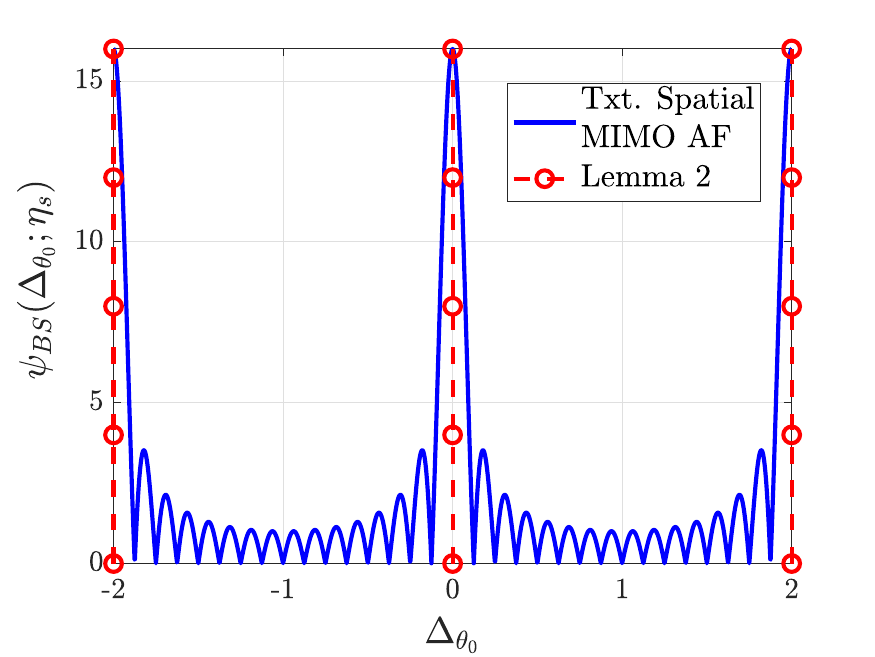}
	\caption{$\eta_s=1$.}
	\label{fig_AF_eta_1}
\end{subfigure}
\hspace{0.05cm}
\begin{subfigure}{0.49\linewidth}
	\centering
	\includegraphics[width=1.16\linewidth]{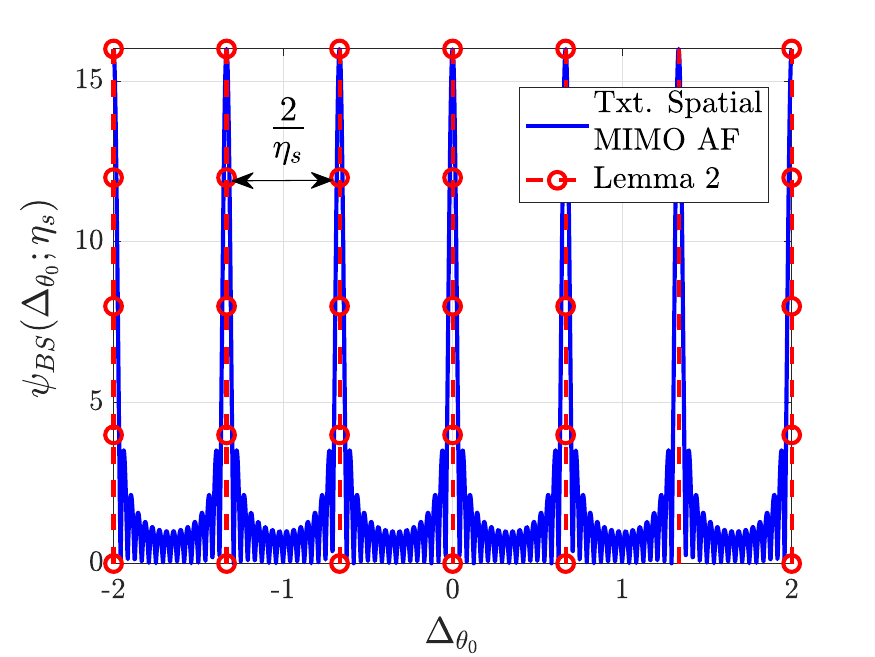}
	\caption{$\eta_s=3$.}
	\label{fig_AF_eta_3}
\end{subfigure}
\caption{Transmit Spatial MIMO-AF with $N_t=16$.}
\label{fig_spatial_AF}
\vspace{-0.3cm}
\end{figure}
In Fig.~\ref{fig_AF_eta_1} and Fig.~\ref{fig_AF_eta_3}, we plot the spatial transmit MIMO-AF observed at Eve for spatial sampling factors of $\eta_s=1$ and $\eta_s=3$, respectively. When $\eta_s=1$, which corresponds to a Nyquist-sampled array at the BS, the AF exhibits a single dominant peak at $\Delta_{\theta_0}=0$ along with rapidly decaying sidelobes. Consequently, within the valid range $\Omega_s$, the spatial signature of a target remains unambiguous, enabling even an arbitrary Eve to sense the presence of targets. By contrast, when $\eta_s\!>\!1$, e.g., $\eta_s=3$, corresponding to an undersampled USA, multiple peaks appear periodically in the AF with identical magnitudes. These additional peaks introduce ambiguity in the transmit spatial signature of a target, even in noiseless cases, and form the basis for spatial-sensing privacy against Eve. Furthermore, Fig.~\ref{fig_AF_eta_3} shows that the peak locations coincide with those predicted by Lemma~\ref{lem_ambiguity_function}, which validates our analysis.
\vspace{-0.2cm}
\subsection{Range-Ambiguity Function at Eve}
\begin{figure}[t]
	\centering
	\vspace{-0.1cm}
	\hspace{-0.35cm}
	\begin{subfigure}{0.49\linewidth}
		\centering
		\includegraphics[width=1.15\linewidth]{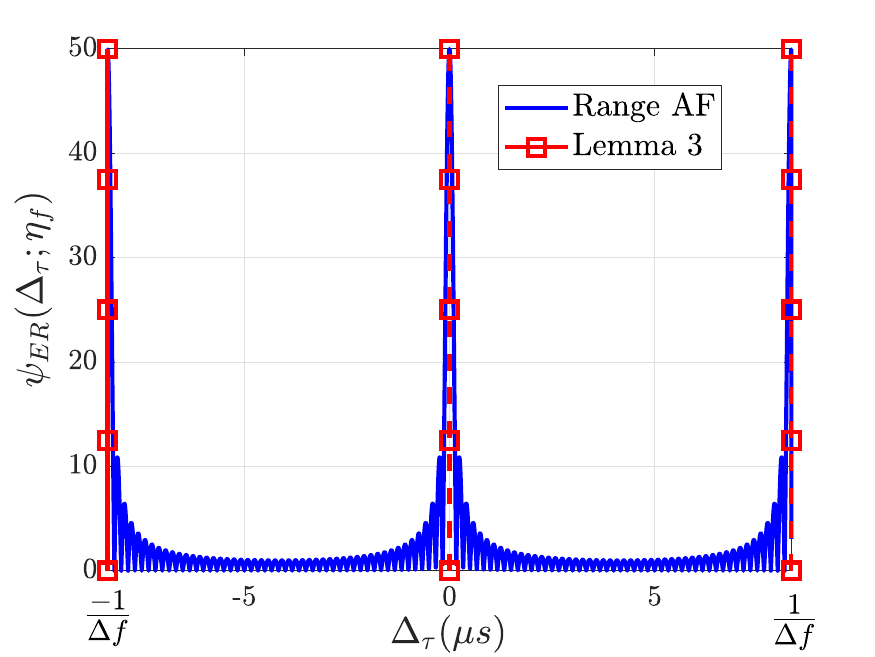}
		\caption{$\eta_f=1$.}
		\label{fig_range_AF_eta_1}
	\end{subfigure}
	\hspace{0.05cm}
	\begin{subfigure}{0.49\linewidth}
		\centering
		\includegraphics[width=1.15\linewidth]{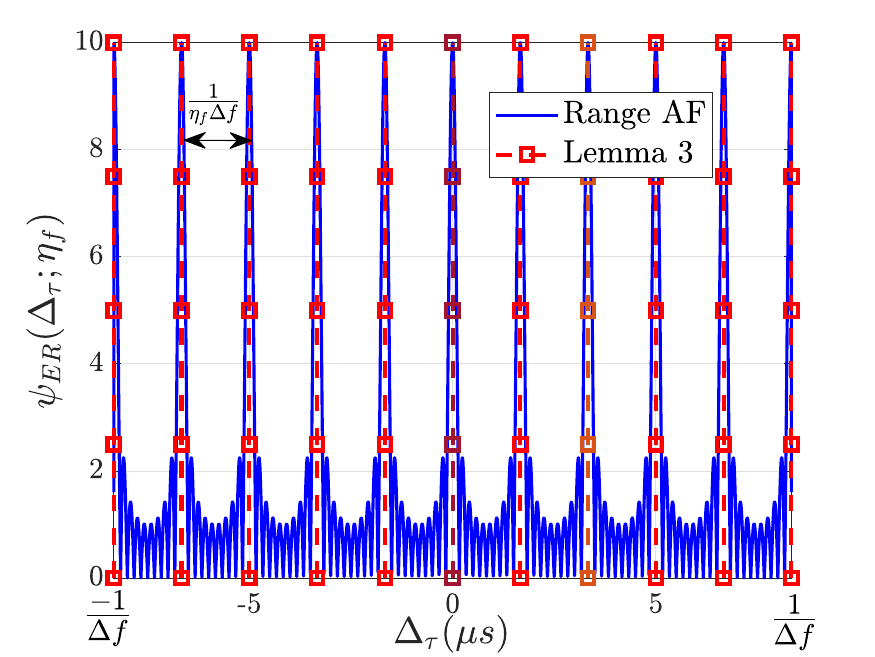}
		\caption{$\eta_f=5$.}
		\label{fig_range_AF_eta_5}
	\end{subfigure}
	\caption{Range AF at Eve with $B=6$ MHz and $\Delta f=120$ kHz.}
	\label{fig_range_AF}
	\vspace{-0.4cm}
\end{figure}
In Fig.~\ref{fig_range_AF_eta_1} and Fig.~\ref{fig_range_AF_eta_5}, we plot the range-AF observed at Eve for $N_s=50$, and frequency sampling factors of $\eta_f=1$ and $\eta_f=5$, respectively. When $\eta_f=1$, the range AF exhibits a single dominant peak at $\Delta_\tau=0$, enabling Eve to obtain an unambiguous estimate of the target range. However, when $\eta_f$ increases to $5$, the AF develops multiple peaks whose magnitudes are equal to that of the peak at $\Delta_\tau=0$. As a result, Eve can no longer reliably distinguish the true target range from its ghost counterparts, thereby increasing the likelihood of incorrect range estimation, as it will be demonstrated later. Furthermore, it can be observed that the peak value of the AF decreases as $\eta_f$ increases. Specifically, the peak magnitude reduces by a factor of $\eta_f$. This behaviour arises because increasing $\eta_f$ reduces the number of pilot SCs available for sensing. Consequently, Eve has fewer observations over which coherent matched-filter combining can be performed, leading to a reduction in the achievable processing gain.

Therefore, frequency sub-sampling has two important effects on the range AF at Eve: $(i)$ it introduces multiple ambiguity peaks that are as strong as the peak corresponding to the true target TOF, and $(ii)$ it reduces the coherent processing gain available for range estimation. Together, these effects significantly degrade Eve's range-sensing capability. Even if Eve were to select the correct peak among the multiple candidates, the reduced processing gain would still lead to a less accurate range estimate. Finally, it is evident from both figures that the theoretical predictions in Lemma~\ref{lem_range_AF_peaks} match the simulation results, validating the accuracy of our analysis.
\vspace{-0.2cm}
\subsection{Spatial-domain Ghost Targets and Performance}
We now characterize the achievable performance of the spatial-sensing at Alice, Eve, and the communication UE as a function of the spatial sampling factor at the BS array. 
\subsubsection{Spatial-domain Ghost-target Maps}
\begin{figure}
	\vspace{-0.85cm}
	\centering
	\includegraphics[width=\linewidth]{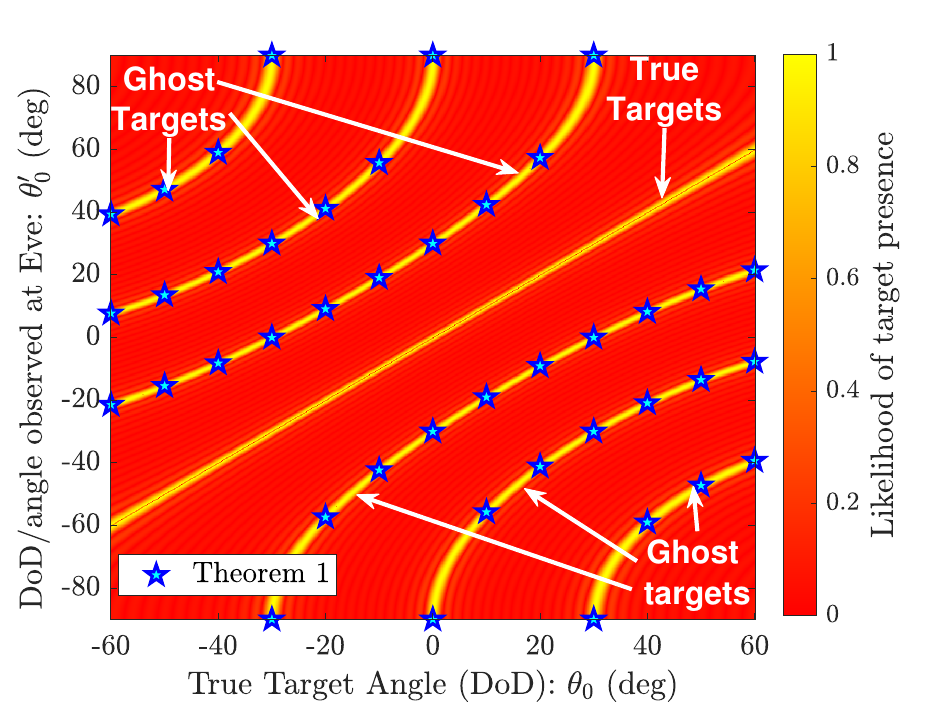}
	\caption{Observed angle at Eve vs. true angle for $\eta_s=4$.}
	\label{fig_trajectory}
	\vspace{-0.5cm}
\end{figure}
In Fig.~\ref{fig_trajectory}, we present a heatmap of the DoDs observed by Eve for all possible true target DoDs when $\eta_s=4$. The diagonal line corresponds to true/correct DoD angles, while several additional high-likelihood response curves appear to be away from the diagonal, representing ghost targets as discussed in Fig.~\ref{fig_AF_eta_3}. For every true target angle $\theta_0$, multiple candidate directions are observed at Eve, i.e., each vertical slice contains several normalized peaks of the matched filter response, and importantly, ghost targets are observed for every possible true target angle. The star markers indicate the theoretical ghost-target locations predicted by Theorem~\ref{thm_ghost_target_location}. The close agreement between these markers and the peaks validates the analytical results.

\subsubsection{Performance of Spatial-domain ISAC}
We now evaluate the system-level performance of the proposed sparse-array-enabled CC-ISAC system, averaged over multiple realizations of the target, UE, and Eve locations. ML-based sensing is employed at both Alice and Eve, and the corresponding results are shown in Fig.~\ref{fig_Angle_domain_ISAC_perf}. On this figure, the left axis shows the sensing RMSE of estimating the target DoD, and the right axis shows the communication sum-rate at the UE. These are evaluated for different spatial sampling factors $\eta_s$, and for two transmit power levels of $P=5$ and $P=15$ dBm, which determine the operating SNR. The results reveal several key observations: $(i)$ When $\eta_s$ increases, the RMSE at Eve increases monotonically due to the growing number of ghost targets introduced by spatial undersampling, which creates ambiguity in estimating the true target direction. By contrast, the sensing RMSE at Alice remains unchanged for all values of $\eta_s$. This is because, under the monostatic setting, Alice estimates the target direction purely from receive array processing, whose AF is unaffected by the sparsity of the BS transmit array; $(ii)$ The RMSE at Alice decreases upon increasing the power, indicating that its estimation error is primarily noise-limited. By contrast, Eve's RMSE remains nearly identical across different SNR levels, showing that its estimation error is dominated by the ambiguity induced by the sparse transmit array rather than by noise. So, increasing the SNR does not improve Eve's sensing performance when the BS employs sparse arrays; $(iii)$ The communication sum-rate is constant for all values of $\eta_s$.
This is because, the spatial undersampling at BS does not alter the eigenvalue distribution of i.i.d. MIMO channels and hence the communication throughput. 
\begin{figure}[t]
	\vspace{-0.85cm}
	\centering
	\includegraphics[width=1\linewidth]{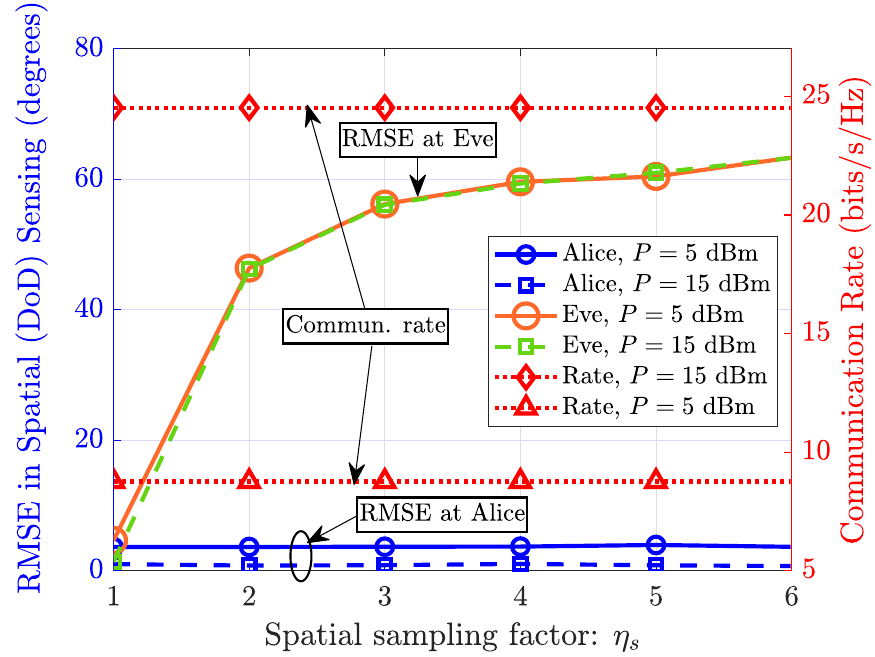}
\caption{Spatial-domain ISAC performance vs. $\eta_s$.}
\label{fig_Angle_domain_ISAC_perf}
\vspace{-0.2cm}
\end{figure}
\vspace{-0.2cm}
\subsection{Range-domain Ghost Targets and Performance}
\subsubsection{Range-domain Ghost-target Maps}
\begin{figure}
	\vspace{-0.35cm}
	\centering
	\includegraphics[width=\linewidth]{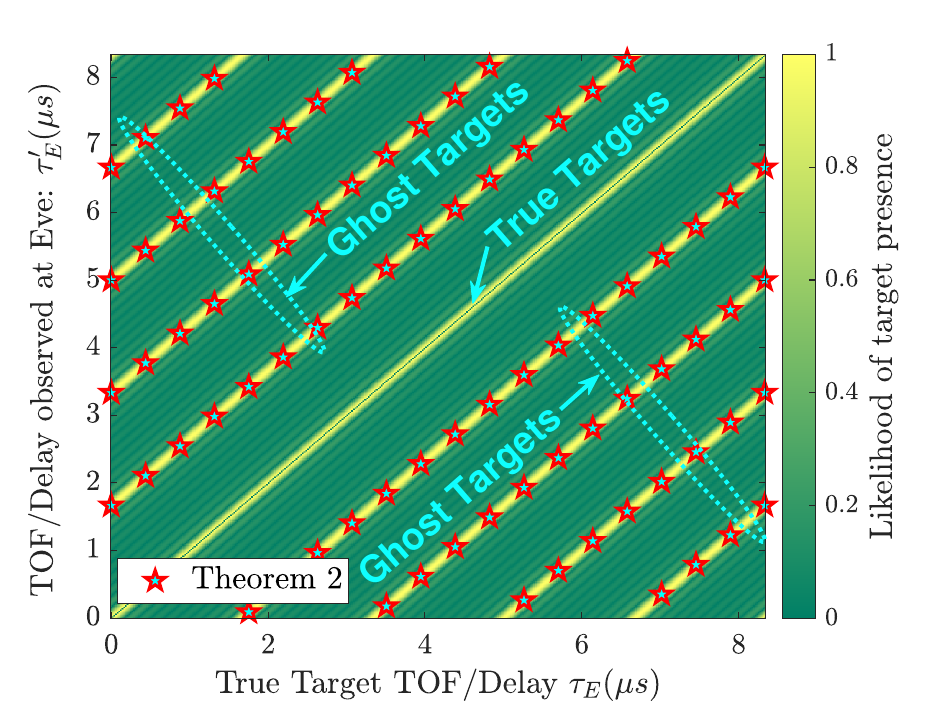}
	\caption{Observed delays at Eve vs. true delay for $\eta_f=5$.}
	\label{fig_trajectory_delay}
		\vspace{-0.5cm}
\end{figure}
In Fig.~\ref{fig_trajectory_delay}, we plot the TOF observed at Eve as a function of the true TOF when $\eta_f=5$. Similar to Fig.~\ref{fig_trajectory}, it can be observed that, for each true TOF value, Eve encounters multiple peak likelihood values that correspond to different candidate TOF estimates along the vertical slice. Thus, when pilot SCs are periodically interleaved with data SCs as shown in Fig.~\ref{fig_OFDM_structure}, Eve inevitably observes multiple ghost range estimates in addition to the true target range. The star markers in the figure denote the locations of the ghost-TOF estimates predicted by Theorem~\ref{thm_range_ambiguity}. It is also evident that these analytically predicted locations align with the simulated ghost targets, validating the accuracy of the analysis. Moreover, the number of ghost TOF estimates scales linearly with the frequency sampling factor $\eta_f$, confirming that higher pilot sparsity leads to severe range ambiguity at Eve.

\subsubsection{Performance of Range-domain ISAC}
\begin{figure}
	\vspace{-0.85cm}
	\centering
	\includegraphics[width=1\linewidth]{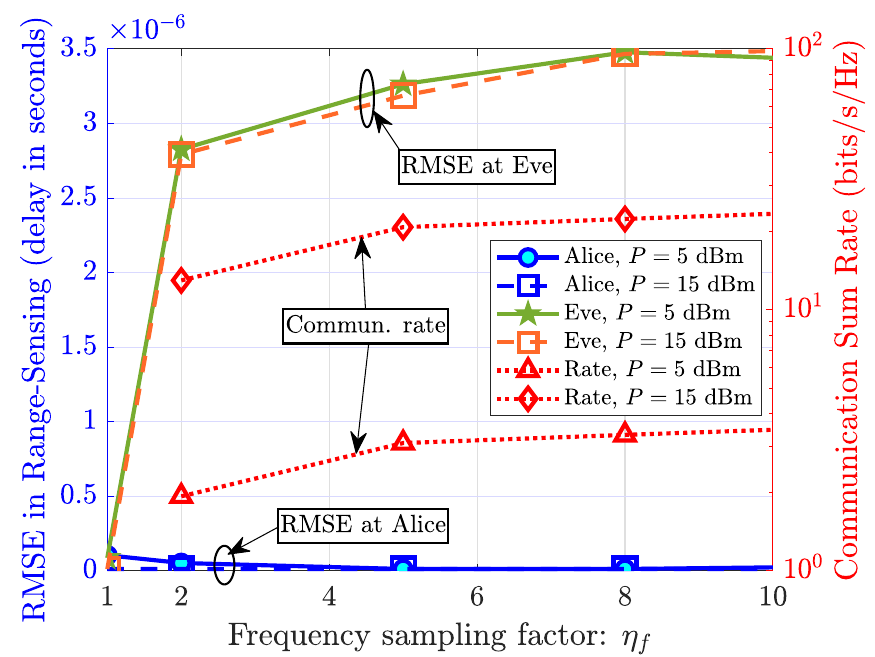}
	\caption{Range-domain ISAC performance vs. $\eta_f$.}
	\label{fig_Range_domain_ISAC_perf}
		\vspace{-0.5cm}
\end{figure}
Next, in Fig.~\ref{fig_Range_domain_ISAC_perf}, we evaluate the performance of the proposed range-domain ISAC framework. Similar to Fig.~\ref{fig_Angle_domain_ISAC_perf}, the left y-axis depicts the RMSE of the TOF estimate at Alice and Eve, while the right y-axis shows the communication sum-rate, both as functions of the frequency sampling factor $\eta_f$. At Eve, the RMSE in TOF-estimation increases with $\eta_f$ due to the introduction of ghost targets in the range domain. 
Moreover, the RMSE remains largely insensitive to the operating SNR, indicating that ambiguity-induced errors dominate the sensing errors encountered by Eve. This behaviour is similar to the SD sensing results from Fig.~\ref{fig_Angle_domain_ISAC_perf}. By contrast, Alice consistently achieves lower TOF-estimation errors. Although increasing $\eta_f$ alters the number of pilot and data SCs, the total number of SCs remains the same and Alice exploits all of them for sensing. Consequently, the range-sensing performance at Alice remains invariant to $\eta_f$ and continues to operate in the conventional noise-limited regime.

From the communication perspective, the achievable sum-rate increases with $\eta_f$, since a higher fraction of SCs is available for data transmission. However, the throughput gains gradually saturate as $\eta_f$ increases, because most of the benefits offered by the pre-log factor are already realized at moderate pilot sparsity levels. This is consistent with Theorem~\ref{prop_legitimate_performance}. Furthermore, for the range of $\eta_f$ considered in this figure, the pilot density remains sufficient for ensuring accurate channel prediction on data SCs, and hence the throughput degradation caused by channel estimation errors remains negligible. As a result, the increase in the pre-log factor dominates, rendering the achievable rate effectively non-decreasing with $\eta_f$. Overall, increasing $\eta_f$ enhances sensing privacy by aggravating the range ambiguity perceived by Eve, while preserving the communication throughput achievable by the legitimate system.
\vspace{-0.2cm}
\subsection{Performance of Positioning and Communications}
\subsubsection{Ghost-Positions Map}
\begin{figure}
	\vspace{-0.85cm}
\centering
\includegraphics[width=\linewidth]{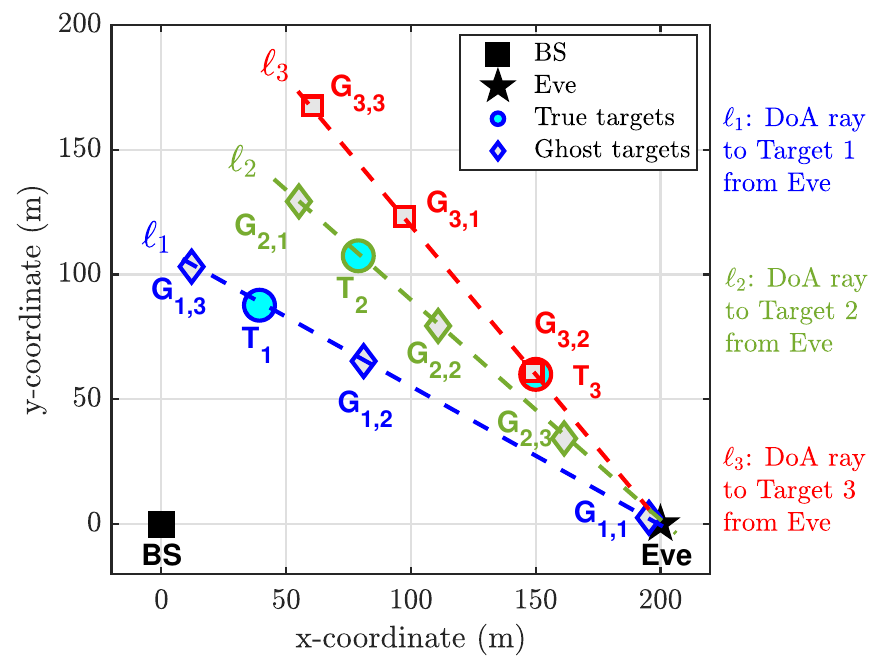}
\caption{Positions of BS, Eve, and ghost targets for $K_s=3$.}
\label{fig_ghost_target_locations}
	\vspace{-0.5cm}
\end{figure}
We now illustrate how the joint spatial-and range-domain ambiguities translate into positioning ambiguity at Eve for $\eta_s=6$ and $\eta_f=64$, when the BS and Eve are located at $\mathbf{p}_{BS} = [0,0]^T$, and $\mathbf{p}_{E} = [200,0]^T$, respectively. Specifically, we plot the true locations and their corresponding ghost locations for $K_s=3$ targets. In Fig.~\ref{fig_ghost_target_locations}, $T_i$, $i\in\{1,2,3\}$, denotes the true location of the $i$th target, while $G_{i,j}$, $j\in\{1,2,3\}$, denotes the $j$th ghost location associated with the $i$th target. It is evident that multiple ghost positions emerge for each true target location, and all of them correspond to Eve's observed spatial-range domain sensing parameters (note that all ghost locations of a target lie on the same DoA ray corresponding to that target). Consequently, Eve cannot uniquely identify the true target position, thereby validating the positioning ambiguity characterized in Theorem~\ref{thm_localization_ambiguity}. Thus the proposed signaling framework can indeed provide positioning-level privacy against unauthorized sensing.
\subsubsection{Performance of Positioning-based ISAC}
In this final study, we evaluate the overall system-level performance as a function of the SD and FD sub-sampling factors in Fig.~\ref{fig_positioning_figure}. The communication channel across SCs is modeled using an exponential correlation model with base $0.9$~\cite{Vineeth_TWC_2021}. The results are averaged over multiple target realizations whose true range and DoD measured from the BS are uniformly distributed in $[50,150]$ meters and $[10^\circ,80^\circ]$, respectively.

Using the same plotting convention as in Figs.~\ref{fig_Angle_domain_ISAC_perf} and~\ref{fig_Range_domain_ISAC_perf}, we observe that, for a fixed value of $\eta_s$, the positioning RMSE at Eve increases with $\eta_f$. This behavior arises because increasing $\eta_f$ creates additional delay ambiguities, causing Eve to associate the target with multiple bistatic ellipses. Since the number of ghost DoD rays remains fixed for a given $\eta_s$, the larger set of ellipses generates a higher number of candidate intersection points, thereby increasing the positioning error. Now, when $\eta_s$ increases, the number of ghost DoD rays further increases, which in turn expands the set of feasible ghost target locations obtained from the intersections between the DoD rays and bistatic ellipses. Consequently, the positioning RMSE increases with both spatial and frequency sub-sampling. However, compared to Fig.~\ref{fig_Range_domain_ISAC_perf}, the onset of significant positioning errors occurs at slightly larger values of $\eta_f$. This is because positioning ambiguity requires sufficient ambiguities simultaneously in both the angle and range domains, as established in Theorem~\ref{thm_localization_ambiguity}. In accordance with Proposition~\ref{prop_existance_of_position_error}, such ambiguities emerge whenever at least one of the sub-sampling factors is sufficiently high, with the severity of the positioning error increasing when both $\eta_s$ and $\eta_f$ are large.

By contrast, Alice remains largely unaffected by the choice of sub-sampling factors, in line with Theorem~\ref{prop_legitimate_performance}. Since Alice can unambiguously estimate both the target angle and delay, it does not experience any noticeable degradation in positioning accuracy. From a communication perspective, the achievable throughput initially increases with $\eta_f$ due to the reduction in pilot overhead, but eventually decreases as sparse pilot allocations degrade the accuracy of channel prediction on data SCs. Nevertheless, even when operating close to the maximum achievable throughput, the proposed framework creates a positioning privacy gap of approximately $40$ meters. Furthermore, by operating at a slightly lower rate (roughly $10$ bits/s/Hz below the maximum), the privacy gap can be further increased to nearly $80$ meters. Importantly, this apparent communication-privacy trade-off is inherent to the native OFDM protocol and it is present even without an Eve (see discussions after Theorem~\ref{prop_legitimate_performance}). Thus, the proposed sensing-privacy mechanism does not introduce additional trade-offs beyond those that are already inherent in native legitimate ISAC operation. 
\vspace{-0.1cm}

\begin{figure}
	\vspace{-0.87cm}
	\centering
	\includegraphics[width=\linewidth]{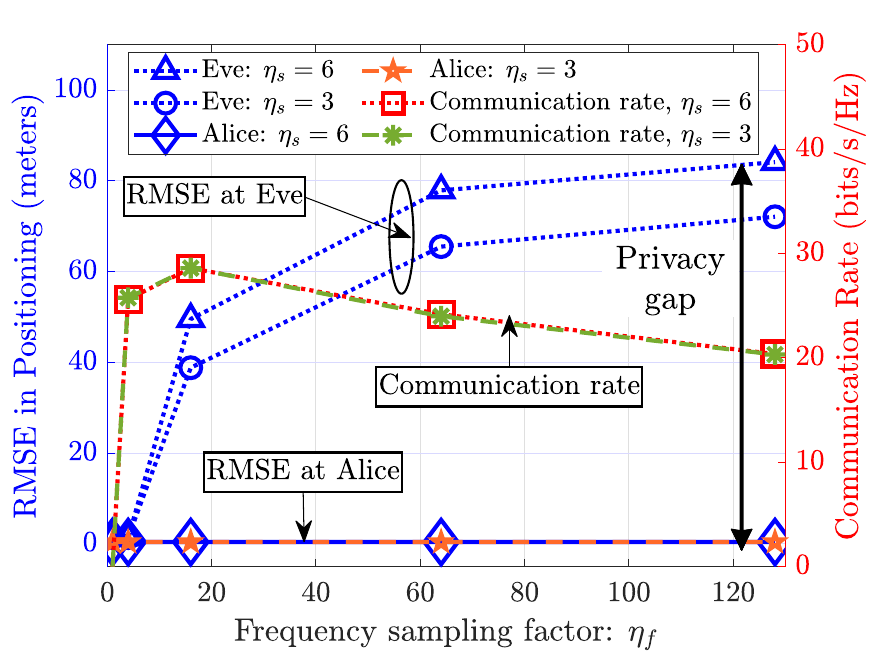}
	\caption{Positioning performance in ISAC vs. $\eta_f$, $\eta_s$.}
	\label{fig_positioning_figure}
		\vspace{-0.4cm}
\end{figure}
\section{Conclusions}
A sub-sampling based sensing-privacy framework was proposed for ISAC systems that exploits sparse arrays and sparse pilot allocations to induce controlled aliasing in both the SD and FD. We derived the range-angle MIMO AF at an unauthorized receiver and showed that SD and FD undersampling naturally generate ghost targets in the angle and range domains. We then established conditions under which these ambiguities translate into positioning ambiguity, thereby preventing an illegitimate receiver from unambiguously positioning a target. We further showed that the proposed framework preserves the sensing performance of the legitimate receiver and the native communication throughput of the ISAC system. Numerical results validated the analysis and demonstrated that spatial-frequency sub-sampling can effectively promote sensing privacy through deception by aliasing. Overall, the proposed approach reveals a new role for sparse arrays and sparse pilots in ISAC, transforming aliasing from a traditional limitation into a useful mechanism for physical-layer sensing privacy. Future work may include extending the ideas to mobility scenarios, non-uniform sparse arrays, mmWave bands, etc.
\vspace{-0.1cm}
\begin{appendices}
	\renewcommand{\thesectiondis}[2]{\Alph{section}:}
	\section{Proof of Lemma~\ref{lem_AF_Eve_expression}}\label{app_Eve_AF}
	Using~\eqref{eq_Eve}, we note that the impulse response of a MF referenced with respect to $(\theta_0',\theta_E',\tau_E')$ can be written as
	\vspace{-0.1cm}
	\begin{equation}\label{eq_MF_IR}
		\mathbf{G}_{e,MF} = \tilde{\mathbf{a}}_{N_e}(\theta_E')\Big\{\left(\mathbf{a}^H_{N_t}(\theta_0';\eta_s)\mathbf{X}_e\right) \odot \left(\mathbf{d}_e(\tau_E')\otimes \mathbf{1}_L\right)^T\Big\}.
		\vspace{-0.1cm}
	\end{equation}
	For notational compactness, let $\tilde{\mathbf{d}}_e(\tau)\triangleq \mathbf{d}_e(\tau)\otimes\mathbf{1}_L$.
	Then the output of the matched filter at Eve can be obtained as
	\vspace{-0.1cm}
	\begin{equation}\label{eq_MF_output}
		z_{e,MF} \triangleq \frac{1}{N^p_sL}\sum\nolimits_{q=1}^{N^p_sL}\mathbf{g}^H_{e,MF,q}\mathbf{y}_{e,q},
		\vspace{-0.1cm}
	\end{equation}
	where $\mathbf{g}_{e,MF,\ell}$ is the $\ell$th column of $\mathbf{G}_{e,MF}$ and $\mathbf{y}_{e,\ell}$ is the $\ell$th column of $\mathbf{Y}_e$. We can simplify~\eqref{eq_MF_output} as given in~\eqref{eq_MF_output_pre_long}-\eqref{eq_MF_output_long}, where $\mathbf{n}_{e,q}$ is the $q$th column of $\mathbf{N}_e$ and $\tilde{z}_{N,e}$ is the noise in the output of the MF.
	\begin{figure*}
		\vspace{-0.75cm}
		\begin{align}
			\!\!z_{e,MF} &=\! \frac{1}{N^p_sL}\!\sum_{q=1}^{N_s^pL}\!\sum_{k=1}^{K_s} \!\bar{\beta}^k_e \tilde{\mathbf{a}}^H_{N_e}(\theta_E')\tilde{\mathbf{a}}_{N_e}(\theta^k_E)\! \left[\mathbf{X}_e\right]^H_{:,q}\!\mathbf{a}_{N_t}(\theta_0';\eta_s)\mathbf{a}^H_{N_t}(\theta^k_0;\eta)\!\left[\mathbf{X}_e\right]_{:,q}\!\big[\tilde{\mathbf{d}_e}(\tau_E')\big]^*_q\big[\tilde{\mathbf{d}_e}(\tau^k_E)\big]_q\!\! + \!\frac{1}{N^p_sL}\!\sum_{q=1}^{N^p_sL}\! \mathbf{g}^H_{e,MF,q}\mathbf{n}_{e,q}\!\label{eq_MF_output_pre_long}\\
			&\hspace{-1cm} =\sum_{k=1}^{K_s}\bar{\beta}^k_e \underbrace{\tilde{\mathbf{a}}^H_{N_e}(\theta_E')\tilde{\mathbf{a}}_{N_e}(\theta^k_E)}_{\triangleq \tilde{z}^k_{1,e}} \times \underbrace{\frac{1}{N_s^pL}\sum_{n \in \mathcal{N}_s^p}\sum_{\ell=1}^{L} \mathbf{x}^H_{n,\ell}\mathbf{a}_{N_t}(\theta_0';\eta_s)\mathbf{a}^H_{N_t}(\theta^k_0;\eta_s)\mathbf{x}_{n,\ell}\big[{\mathbf{d}_e}(\tau_E')\big]^*_n\big[{\mathbf{d}_e}(\tau^k_E)\big]_n}_{\triangleq \tilde{z}^k_{2}} + \underbrace{\frac{1}{N_s^pL}\sum_{q=1}^{N_s^pL} \mathbf{g}^H_{e,MF,q}\mathbf{n}_{e,q}}_{\triangleq \tilde{z}_{N,e}} \label{eq_MF_output_long}
		\end{align}
		\vspace{-0.2cm}
		\hrule
		\vspace{-0.4cm}
	\end{figure*}
	We can clearly see from~\eqref{eq_MF_output_long} that the term $\tilde{z}^k_{1,e}$ characterizes the ambiguity in estimating $\theta^k_E$, whereas $\tilde{z}^k_{2}$, which captures the interaction between the ISAC waveform, the ISAC transmit array properties and the target delay value, determines the ambiguity in jointly estimating $\theta^k_0$ and $\tau^k_E$. Thus, the overall \emph{range-angle MIMO-AF} at Eve can be written as in~\eqref{eq_AF_intermediate_1}, which is further simplified in~\eqref{eq_AF_intermediate} at the top of the next page, where we first assumed that $L = \kappa N_t$ for some $\kappa \in \mathbb{Z}_+$; then upon exploiting the properties of the pilot sequences in~\eqref{eq_pilot_signal}, we used the following simplification:
	\vspace{-0.1cm}
	\begin{equation}\label{eq_pilot_SC_signal_OP}
		\sum\nolimits_{\ell=1}^{L}\mathbf{x}_{n,\ell}\mathbf{x}_{n,\ell}^H = \kappa \sum\nolimits_{q=1}^{N_t}[\boldsymbol{\Phi}]_{:,q}[\boldsymbol{\Phi}]_{:,q}^H = \frac{L}{N_t} \mathbf{I}_{N_t}.
	\end{equation}
	\begin{figure*}
		\vspace{-0.2cm}
		\begin{align}
			&	\!\!\!\!\psi_E(\boldsymbol{\theta}_0,\theta_0',\boldsymbol{\theta}_E,\theta_E',\!\boldsymbol{\tau}_E,\!\tau_E')\! \triangleq \! \sum_{k=1}^{K_s}|\tilde{z}^k_{1,e}\tilde{z}^k_{2}| = \!\sum_{k=1}^{K_s}\! \frac{\left|\tilde{\mathbf{a}}^H_{N_e}\!(\theta_E')\tilde{\mathbf{a}}_{N_e}\!(\theta^k_E)\right|}{N^p_sL}\!\!\left|\sum_{n \in \mathcal{N}_s^p}\!\!\!\mathbf{a}^H_{N_t}(\theta_0';\eta_s)\! \left(\sum_{\ell=1}^{L}\!\mathbf{x}_{n,\ell}\mathbf{x}^H_{n,\ell}\!\right)\!\mathbf{a}_{N_t}\!(\theta^k_0;\eta_s)\big[{\mathbf{d}_e}(\tau_E')\big]^*_n\big[{\mathbf{d}_e}(\tau^k_E)\big]_n\right| \label{eq_AF_intermediate_1} \\
			&\hspace{5cm}\stackrel{(a)}{=} \sum_{k=1}^{K_s} \underbrace{\left|\tilde{\mathbf{a}}^H_{N_e}(\theta_E')\tilde{\mathbf{a}}_{N_e}(\theta^k_E)\right|}_{\triangleq\psi_{EA}(\theta_E',\theta_E^k)} \times\underbrace{\left|\mathbf{a}^H_{N_t}(\theta_0';\eta_s) \mathbf{a}_{N_t}(\theta^k_0;\eta_s)\right|/N_t}_{\triangleq\psi_{BS}(\theta_0',\theta^k_0,\eta_s)} \times \underbrace{\left|\mathbf{d}_e^H(\tau_E')\mathbf{d}_e(\tau^k_E)\right|/N_s^p}_{\triangleq\psi_{ER}(\tau_E',\tau^k_E,\eta_f)},\label{eq_AF_intermediate}
		\end{align}
		\vspace{-0.2cm}
		\hrule
		\vspace{-0.4cm}
	\end{figure*}
	Now, using the definition of ${\mathbf{a}}_{N_t}(\theta;\eta_s)$, we obtain
	\begin{align}\label{eq_AF_derivation}
		\psi_{BS}(\theta_0',\theta^k_0,\eta_s) &\!=\frac{1}{N_t}\! \left|\sum\nolimits_{n=1}^{N_t}\!\!e^{j\pi(n-1)\eta_s\left(\sin(\theta_0')-\sin(\theta^k_0)\right)}\!\right|\!\!\\
		&\hspace{-2.1cm}=\frac{1}{N_t}\left|\frac{1-e^{j\pi N_t\eta_s\left(\sin(\theta_0')-\sin(\theta^k_0)\right)}}{1-e^{j\pi\eta_s\left(\sin(\theta_0')-\sin(\theta^k_0)\right)}}\right|= \left|\dfrac{\sin\left(\frac{\pi N_t}{2}\eta_s\Delta^k_{\theta_0',\theta_0}\right)}{N_t\sin\left(\frac{\pi}{2}\eta_s\Delta^k_{\theta_0',\theta_0}\right)}\right|, \label{eq_AF_final_form}
	\end{align}
	which yields~\eqref{eq_AMB_BS}. Similarly, we can derive~\eqref{eq_AMB_Eveangle} and~\eqref{eq_AMB_Everange}. Furthermore, the support of $\psi_{ER}(\tau_E',\tau^k_E,\eta_f)$ follows by recalling that Eve is concerned with targets only within the unambiguous range for which we have $|\tau_E'-\tau^k_E| < \frac{1}{\Delta f}$. Substituting for these expressions in~\eqref{eq_AF_intermediate}, completes the proof.
 \qed
\vspace{-0.15cm}		
\section{Proof of Lemma~\ref{lem_ambiguity_function}}\label{app_SP_AF_Peak}
	From~\eqref{eq_AF_final_form}, the (reparameterized) spatial MIMO-AF of a $N_t$-element USA with sampling factor $\eta_s$ can be written as
	\vspace{-0.1cm}
	\begin{equation}\label{eq_reparamter_AF}
		\!\!	\psi_{BS}(\Delta_s;\eta_s) =\frac{1}{N_t}
		\left|
		{
			\sin\left(\frac{\pi N_t}{2}\eta_s \Delta_s\right)
		}\Big/{
			\sin\left(\frac{\pi}{2}\eta_s \Delta_s\right)
		}
		\right|,
		\vspace{-0.1cm}
	\end{equation}
	where $\Delta_s$ is the spatial difference between a reference and target angle expressed in the normalized (i.e., direction-sine) domain. Now, since the numerator and denominator of $\psi_{BS}(\Delta_s;\eta_s)$ are bounded in magnitude by $1$, $\psi_{BS}(\Delta_s;\eta_s)$ attains its local maximum when the denominator approaches $0$. This happens when 
	$\sin\left(\frac{\pi}{2}\eta_s \Delta_s\right) = 0$, or equivalently when $\frac{\pi}{2}\eta_s \Delta_s = \pi n, \ n \in \mathbb{Z}.$ Rearranging this equation yields
	\vspace{-0.1cm}
	\begin{equation}
		\vspace{-0.1cm}
		\Delta_{s,n} = {2n}\big/{\eta_s}, \quad n \in \mathbb{Z}.
	\end{equation}
	Finally, since the spatial difference variable satisfies $\Delta_s \in \Omega_s$, only those indices satisfying $|2n/\eta_s|< 2$ are admissible. Hence the valid choice of indices for $n$ are those in~\eqref{eq_peak_AF_values}.
 \qed
\vspace{-0.15cm}
\section{Proof of Theorem~\ref{thm_ghost_target_location}}\label{app_SP_Ghost}

	Observe from Lemma~\ref{lem_ambiguity_function} that the AF $\psi_{BS}(\Delta;\eta_s)$ attains its peaks at $\Delta_{s,n}={2n}\big/{\eta_s}$,
	all with magnitude $\psi_{BS}(\Delta_n;\eta_s)=1$, equal to the main peak at $\Delta_s=0$. Then, since a ML estimator selects the peak locations of the AF, each $\Delta_{s,n}$ yields a candidate target direction. Noting that $\Delta_s=\sin(\theta'_0)-\sin(\theta_0)$, the peak condition becomes
	\begin{equation}
		\sin(\hat{\theta}_{0,n})-\sin(\theta_0)={2n}\big/{\eta_s}.
	\end{equation}
	Solving for $\hat{\theta}_{0,n}$ yields~\eqref{eq_ghost_target_location},
	and these correspond to the ghost target angles when $n\neq 0$. Finally, requiring $\sin(\hat{\theta}_{0,n})\in[-1,1)$ yields the admissible index set $\mathcal{N}_g(\theta_0)$ in~\eqref{eq_gh_location_ind}.  Thus, for $\eta_s> 1$, there exists at least one non-zero integer $n$, implying that multiple (ghost) DoD estimates are observed at Eve in addition to the true DoD. This completes the proof. \qed
\vspace{-0.2cm}	
	\section{Proof of Theorem~\ref{thm_localization_ambiguity}}\label{app_Position_AMB_Cond}
		Let a candidate target location be denoted by $\mathbf{p}_T$. Since the DoA at Eve, $\theta_E$, can naturally be uniquely estimated, any candidate target location estimated by Eve must lie along the ray originating at Eve and directed along the DoA observed. Hence, $\mathbf{p}_T$ can be expressed in its parametric form as
		\vspace{-0.1cm}
	\begin{equation}\label{eq_parametric_form}
		\mathbf{p}_T
		=
		\mathbf{p}_E + \beta \mathbf{u}_E,
		\qquad
		\beta > 0,
				\vspace{-0.1cm}
	\end{equation}
	where $\mathbf{u}_E = [\cos\theta_E,\sin\theta_E]^T$, and $\beta$ measures the distance between Eve and target. Recall from Theorem~\ref{thm_range_ambiguity} that Eve cannot distinguish the true delay $\tau_E$ from other ghost delays, as characterized in~\eqref{eq_delay_ghost_values}. 
	Furthermore, each such delay corresponds to a locus traced by a bistatic ellipse with focal points at the BS and Eve locations. Therefore, any candidate target location associated with $\tau_{E,m}$ must satisfy
	\begin{equation}
		\label{eq_ellipse_constraint}
		\|\mathbf{p}_T-\mathbf{p}_{BS}\|_2
		+
		\|\mathbf{p}_T-\mathbf{p}_E\|_2
		=
		c\tau_E + \frac{cm}{\eta_f \Delta f}.
	\end{equation}
	Next, recall  from Theorem~\ref{thm_ghost_target_location} that Eve cannot distinguish the true DoD $\theta_0$ from other ghost DoDs given by~\eqref{eq_ghost_target_location}. 
	Now for an arbitrary location $\mathbf{y}$, the corresponding DoD at the BS satisfies
	\begin{equation}
		\sin(\theta(\mathbf{y}))
		=
		\frac{
			\mathbf{e}_2^T(\mathbf{y}-\mathbf{p}_{BS})
		}
		{
			\|\mathbf{y}-\mathbf{p}_{BS}\|_2
		}.
	\end{equation}
	Hence, for a candidate target location to be consistent with one of the ghost DoDs characterized in Theorem~\ref{thm_ghost_target_location}, the following relationship has to hold: 
	\begin{equation}
		\label{eq_dod_constraint}
		\frac{
			\mathbf{e}_2^T(\mathbf{p}_T-\mathbf{p}_{BS})
		}
		{
			\|\mathbf{p}_T-\mathbf{p}_{BS}\|_2
		}
		=
		\sin(\theta_0)+\frac{2n}{\eta_s}.
	\end{equation}
	Now, it follows from~\eqref{eq_ellipse_constraint} that
	\begin{equation}
		\|\mathbf{p}_T-\mathbf{p}_{BS}\|_2
		=
		c\tau_E+\frac{cm}{\eta_f\Delta f}
		-\|\mathbf{p}_T-\mathbf{p}_E\|_2.
	\end{equation}
	Using~\eqref{eq_parametric_form} in the above yields
	\begin{equation}\label{eq_final_difference_locations}
		\|\mathbf{p}_T-\mathbf{p}_{BS}\|_2
		=
		c\tau_E+\frac{cm}{\eta_f\Delta f}
		-\beta\|\mathbf{u}_E\|_2
		=
		c\tau_E+\frac{cm}{\eta_f\Delta f}
		-\beta,
	\end{equation}
	where the last equality follows because $\|\mathbf{u}_E\|_2=1$. 
	Finally, using $\mathbf{p}_T-\mathbf{p}_{BS}= \mathbf{p}_E+\beta\mathbf{u}_E-\mathbf{p}_{BS}$,
	together with~\eqref{eq_final_difference_locations}
	in~\eqref{eq_dod_constraint},
	we obtain~\eqref{eq_localization_amb_eqn}.
	
	In summary, if there exist integers $n \neq 0, m\neq 0$ and a scalar $\beta>0$ satisfying \eqref{eq_localization_amb_eqn}, then there exists at least one ghost candidate location
	$\mathbf{p}_g \neq \mathbf{p}_T$ given by
			\vspace{-0.1cm}
	\begin{equation}
		\mathbf{p}_g
		=
		\mathbf{p}_E+\beta\mathbf{u}_E,
				\vspace{-0.1cm}
	\end{equation}
	which simultaneously satisfies:
	\begin{enumerate}
		\item the observed DoA $\theta_E$,
		\item a valid ghost delay $\tau_{E,m}$, for some $m \in [|\mathcal{M}_g(\tau_E)|]$, and
		\item a valid ghost DoD $\theta_{0,n}$, for some $n \in [|\mathcal{N}_g(\theta_0)|]$.
	\end{enumerate}
	Consequently, the parameter tuple associated with
	$\mathbf{p}_g$
	is indistinguishable from that of the true target based on Eve's observations.
	Therefore, Eve cannot uniquely determine the target location, thereby preserving sensing privacy. \qed
		\section{Proof of Proposition~\ref{prop_existance_of_position_error}}\label{app_Position_Amb_satisfy}
		
		Define the ambiguity offsets
	\vspace{-0.1cm}
	\begin{equation}
		C \triangleq C(n)
		=
		\sin(\theta_0)+\frac{2n}{\eta_s},
		\
		D \triangleq D(m) =
		c\tau_E+\frac{cm}{\eta_f\Delta f},
				\vspace{-0.1cm}
	\end{equation}
	and the geometry-dependent quantities
	\vspace{-0.1cm}
	\begin{equation}
		A = \mathbf{e}_2^T(\mathbf{p}_E-\mathbf{p}_{BS}), \quad
		B = \mathbf{e}_2^T\mathbf{u}_E,
				\vspace{-0.1cm}
	\end{equation}
	where all symbols have the same meaning as in Theorem~\ref{thm_localization_ambiguity}. Then, using these definitions, the positioning ambiguity condition in Theorem~\ref{thm_localization_ambiguity} can be written as
	\vspace{-0.1cm}
	\begin{equation}
		A+\beta B
		=
		C(D-\beta).
		\vspace{-0.1cm}
	\end{equation}
	Rearranging the above yields	$\beta = \big(CD-A)\Big/\big(B+C\big)$.
	Thus, the existence of a feasible ghost target location requires
	\vspace{-0.1cm}
	\begin{equation}\label{eq:feasibility}
		\beta>0
		\Longleftrightarrow 
		(CD-A)(B+C)>0.
		\vspace{-0.1cm}
	\end{equation}
	Next, observe that the ambiguity spacings in the DoD and delay domains are given by $
	\Delta C
	=
	2/\eta_s$, and
	$\Delta D
	=
	c/\eta_f\Delta f$. 
	Hence, as $\eta_s,\eta_f$ become very large, the ambiguity offsets ${C(n)}$ and ${D(m)}$ become arbitrarily dense, and the variables $C$ and $D$ may be treated as continuous quantities. To this end, we now choose a non-degenerate ghost DoD ray, i.e., $C,n\neq 0$, so that $B+C\neq 0$, and then analyse the feasibility of~\eqref{eq:feasibility}. 
	
	\textit{Case 1: $C>0$.}	If $B+C>0$, then \eqref{eq:feasibility} reduces to
	\vspace{-0.1cm}
	\begin{equation}\label{eq_D_condition}
		CD-A>0
		\Longleftrightarrow
		D>A/C.
				\vspace{-0.1cm}
	\end{equation}
	Since $D$ can take arbitrarily fine values as $\eta_f\rightarrow\infty$, a $D$ satisfying~\eqref{eq_D_condition} always exists for an appropriate value of $m$.
	
	On the other hand, if $B+C<0$, then \eqref{eq:feasibility} reduces to
	\vspace{-0.1cm}
	\begin{equation}
		\vspace{-0.1cm}
		CD-A<0
		\Longleftrightarrow
		D<A/C,
	\end{equation}
	and a feasible value of $D$ can again be selected.
	
	\textit{Case 2: $C<0$.} If $B+C>0$, then \eqref{eq:feasibility} reduces to
	\vspace{-0.1cm}
	\begin{equation}
		\vspace{-0.1cm}
		CD-A>0
		\Longleftrightarrow
		D<A/C,
				\vspace{-0.1cm}
	\end{equation}
	and a feasible choice of $D$ satisfying the above exists. Similarly, when $B+C<0$, then \eqref{eq:feasibility} reduces to
	\vspace{-0.1cm}
	\begin{equation}
		\vspace{-0.1cm}
		CD-A<0
		\Longleftrightarrow
		D>A/C,
				\vspace{-0.1cm}
	\end{equation}
	which again admits a feasible choice of $D$. In hindsight, the above analysis reveals that ambiguity in the spatial and frequency domains can compensate for one another, so that a suitable choice of $(C,D)$ satisfies \eqref{eq:feasibility}. Since the ambiguity sets $\{C(n)\}$ and $\{D(m)\}$ become arbitrarily dense as $\eta_s,\eta_f\to\infty$, there exists a non-trivial set of integers $n\neq 0,m\neq 0$ satisfying the condition in Theorem~\ref{thm_localization_ambiguity}. \qed
			\vspace{-0.1cm}
	\section{Proof of Lemma~\ref{lem_AF_Alice_expression}}\label{app_Alice_AF}
	
	Since Alice has the knowledge of both data and pilot samples, she performs sensing using all OFDM SCs. Then, the impulse response of MF at Alice is given by
	\vspace{-0.1cm}
	\begin{equation}\label{eq_MF_IR_Alice}
		\vspace{-0.1cm}
		\mathbf{G}_{a,MF} = \tilde{\mathbf{a}}_{N_r}(\theta_A')\Big\{\left(\mathbf{a}^H_{N_t}(\theta_0';\eta_s)\mathbf{X}\right) \odot \left(\mathbf{d}_a(\tau_A')\otimes \mathbf{1}_L\right)^T\Big\}.
				\vspace{-0.1cm}
	\end{equation}
	Then, similar to the derivation of~\eqref{eq_AF_intermediate_1}, we can obtain the range-angle MIMO-AF as in~\eqref{eq_AF_intermediate_1_Alice}. 
	\begin{figure*}
		\vspace{-0.0cm}  
		\begin{equation}\label{eq_AF_intermediate_1_Alice}
			\psi_A(\boldsymbol{\theta}_0,\theta_0',\boldsymbol{\theta}_A,\theta_A',\boldsymbol{\tau}_A,\tau_A') = \sum_{k=1}^{K_s} \frac{\left|\tilde{\mathbf{a}}^H_{N_r}(\theta_A')\tilde{\mathbf{a}}_{N_r}(\theta^k_A)\right|}{N_sL}\left|\sum_{n \in \mathcal{N}_s}\mathbf{a}^H_{N_t}(\theta_0';\eta_s) \left(\sum_{\ell=1}^{L}\mathbf{x}_{n,\ell}\mathbf{x}^H_{n,\ell}\right)\mathbf{a}_{N_t}(\theta^k_0;\eta_s)\big[{\mathbf{d}_a}(\tau_A')\big]^*_n\big[{\mathbf{d}_a}(\tau^k_A)\big]_n\right|. 
		\end{equation}
		\hrule
		\vspace{-0.2cm}
	\end{figure*}
	In~\eqref{eq_AF_intermediate_1_Alice}, to simplify the second term containing the sum over all SCs, we note the following: 
	\begin{enumerate}[leftmargin=*]
		\item When $n \in \mathcal{N}_s^p$, from~\eqref{eq_pilot_SC_signal_OP}, we get $\sum_{\ell=1}^{L}\mathbf{x}_{n,\ell}\mathbf{x}^H_{n,\ell} = \frac{L}{N_t}\mathbf{I}_{N_t}$. 
		\item When $n \in \mathcal{N}_s^d$, since the data symbols are random, by the law of large numbers,\footnote{Note that the random sidelobes that may arise from data-payload-based sensing get suppressed due to the large CPI.} $\sum_{\ell=1}^{L}\!\mathbf{x}_{n,\ell}\mathbf{x}^H_{n,\ell} \! \rightarrow \! L\boldsymbol{\Sigma}_{X,n} \! = \frac{L}{N_t}\mathbf{I}_{N_t}$.
	\end{enumerate}
	Furthermore, under monostatic setting, we have $\theta^k_0=\theta^k_A, \forall k $. Thus, it is sufficient to estimate the DoA, which will provide complete information about the target in the SD. Then, the overall range-angle MIMO-AF at Alice becomes
	\vspace{-0.1cm}
	\begin{multline}
		\vspace{-0.1cm}
		\psi_A(\boldsymbol{\theta}_0,\theta_0',\boldsymbol{\theta}_A,\theta_A',\boldsymbol{\tau}_A,\tau_A') =  \\ \sum\nolimits_{k=1}^{K_s} \left|\tilde{\mathbf{a}}^H_{N_r}(\theta_A')\tilde{\mathbf{a}}_{N_r}(\theta^k_A)\right| \times {\left|\mathbf{d}_a^H(\tau_A')\mathbf{d}_a(\tau^k_A)\right|}\big/{N_s}.
	\end{multline}
	Simplifying the above similar to~\eqref{eq_AF_final_form} completes the proof. \qed
	
		\section{Proof of Theorem~\ref{prop_legitimate_performance}}\label{app_legitimate_performance}
	 We prove the theorem separately for sensing at Alice and for communication with a UE.
	
	\underline{Sensing:}
	It is readily observed from Lemma~\ref{lem_AF_Alice_expression} that the range-angle AF at Alice, $\psi_A(\boldsymbol{\theta}_0,\theta_0',\boldsymbol{\theta}_A,\theta_A',\boldsymbol{\tau}_A,\tau_A')$ scales as~\eqref{eq_eve_AF_seperable_form} with $N_e=N_r$ and $\eta_s=\eta_f=1$. Hence the system performs similar to a Nyquist-sampled system in the spatial-frequency domain. Furthermore, the AF at Alice for a given $k$th target obeys:
	\begin{equation}
		\left|
		\frac{\sin\!\left(\frac{\pi N_r}{2}\Delta^k_\theta\right)}
		{N_r \sin\!\left(\frac{\pi}{2}\Delta^k_\theta\right)}
		\right|
		\cdot
		\left|
		\frac{\sin\!\left(\pi N_s \Delta f \Delta^k_\tau\right)}
		{N_s \sin\!\left(\pi \Delta f \Delta^k_\tau\right)}
		\right| \stackrel{(a)}{\rightarrow}
		\mathcal{F}_{N_r}(\Delta^k_\theta)\,\mathcal{F}_{N_s}(\Delta^k_\tau),
	\end{equation}
	where $\mathcal{F}_{x}(\cdot)$ is the Fejér Kernel, and in $(a)$, $N_r,N_s \rightarrow \infty$.
	Since a Fejér kernel has a unique maximum at $0$ and vanishes elsewhere, the $k$th sum-component of $\psi_A(\boldsymbol{\theta}_0,\theta_0',\boldsymbol{\theta}_A,\theta_A',\boldsymbol{\tau}_A,\tau_A')$  has a unique global maximum at $(0,0)$ over its domain $\Omega_s \times \Omega_f$. This implies unique identifiability of target parameters by Alice, independent of $\eta_s$, $\eta_f$.\\
	\indent \underline{Communication Rate:} 
	On a data SC-$n \in \mathcal{N}_s^d$, the signal received in an OFDM symbol can be written as
	\vspace{-0.1cm}
	\begin{equation}
		\mathbf{y}_{c,n} = \mathbf{H}_{c,n} \mathbf{x}_{c,n} + \mathbf{n}_{c,n} \stackrel{(b)}{=} \hat{\mathbf{H}}_{c,n}\mathbf{x}_{c,n} + \tilde{\mathbf{H}}_{c,n}\mathbf{x}_{c,n} +  \mathbf{n}_{c,n},
	\end{equation}
	where in $(b)$, $\tilde{\mathbf{H}}_{c,n}$ is the channel estimation noise. Then, using the worst-case noise theorem~\cite{Hassibi_TIT_2003}, we treat $\tilde{\mathbf{H}}_{c,n}$ as Gaussian self-interference,  and bound the achievable rate in~\eqref{eq_througput} as
	\begin{equation}
		R \geq \frac{1}{N_s} \sum\nolimits_{n \in \mathcal{N}_s^d}
		\log_2 \det\!\left(
		\mathbf{I}_{N_c} + \frac{P}{N_tN_s} \boldsymbol{\Sigma}_{n,IN}^{-1}\hat{\mathbf{H}}_{c,n}\hat{\mathbf{H}}_{c,n}^H
		\right),
	\end{equation}
	where $\boldsymbol{\Sigma}_{n,IN} = \mathbb{E}[(\tilde{\mathbf{H}}_{c,n}\mathbf{x}_{c,n} +  \mathbf{n}_{c,n})(\tilde{\mathbf{H}}_{c,n}\mathbf{x}_{c,n} +  \mathbf{n}_{c,n})^H]$ is the interference-plus-noise covariance matrix. Now, exploiting the properties of all the variables, we can show that
	\vspace{-0.1cm}
	\begin{equation}
		\boldsymbol{\Sigma}_{n,IN} = \frac{1}{N_t} \sum_{i=1}^{N_t}(\mathbf{e}_i^T \otimes \mathbf{I}_{N_r})\mathbf{E}_n(\mathbf{e}_i^T \otimes \mathbf{I}_{N_r})^H + \sigma_w^2\mathbf{I}_{N_r},
		\vspace{-0.1cm}
	\end{equation}
	where $\mathbf{E}_n$ is as given in Lemma~\ref{lemma_CSI_error_variance}, and $\mathbf{e}_i \in \mathbb{C}^{N_t}$ is the $i$th canonical basis vector. Let $\mathbf{G}_{c,n} \triangleq \boldsymbol{\Sigma}_{n,IN}^{-1/2}\hat{\mathbf{H}}_{c,n}$ and $\{\mu_{k,n}\}_{k}$ be its singular values. Then, by matrix theory properties,
	\vspace{-0.1cm}
	\begin{equation}
		R \geq \frac{1}{N_s} \sum\nolimits_{n \in \mathcal{N}_s^d}
		\sum\nolimits_{k=1}^{\mathrm{rank}(\mathbf{G}_n)}
		\log_2\!\left(1 + \frac{P}{N_tN_s}\mu_{k,n}\right).
	\end{equation}
	Note that $\mathrm{rank}(\mathbf{G}_n) = \mathrm{rank}(\hat{\mathbf{H}}_{c,n})$ since $\boldsymbol{\Sigma}_{n,IN}$ has full-rank. Finally, observing that $|\mathcal{N}_s^d| = N_s - N_s^p$ and $\eta_f = N_s/N_s^p$, the achievable rate thus scales order-wise as given in~\eqref{eq_Rate_Comm_UE}. \qed

\end{appendices}
\bibliographystyle{IEEEtran}
\bibliography{IEEEabrv,References}	
\end{document}